\newcommand{\spann}{\mathcal{SN}}
\newtheorem{theorem}{Theorem}[section]
\newtheorem{lemma}[theorem]{Lemma}
\theoremstyle{definition}
\newtheorem{definition}[theorem]{Definition}
\newtheorem{corollary}[theorem]{Corollary}
\newtheorem{example}[theorem]{Example}
\theoremstyle{remark}
\algnewcommand\algorithmicswitch{\textbf{switch}}
\algnewcommand\algorithmiccase{\textbf{case}}
\algnewcommand\algorithmicassert{\texttt{assert}}
\algnewcommand\Assert[1]{\State \algorithmicassert(#1)}%
\title{Some Results on $[1, k]$-sets of Lexicographic Products of Graphs}
\author{P. Sharifani$^{1}$, M.R.~Hooshmandasl$^{2}$\\
	\footnotesize{$^{1,2}$Department of Computer Science, Yazd University, Yazd, Iran.}\\
	\footnotesize{$^{1,2}$The Laboratory of Quantum Information Processing, Yazd University, Yazd, Iran.}  \\
	\footnotesize{e-mail:$^1$ pouyeh.sharifani@gmail.com, $^2$ hooshmandasl@yazd.ac.ir}}
\date{}
\begin{document}
	
	\maketitle
	
	\begin{abstract}
		A subset $S \subseteq V$ in a graph $G = (V,E)$ is called  a $[1, k]$-set,	 if for every vertex $v \in V \setminus S$, $1 \leq \vert N_G(v) \cap S \vert \leq k$.		
		The $[1,k]$-domination number of $G$,  denoted by  $\gamma_{[1, k]}(G)$ is the size of the smallest $[1,k]$-sets of $G$.
		A set $S'\subseteq V(G)$ is called a total $[1,k]$-set, if for every vertex $v \in V$, $1 \leq \vert N_G(v) \cap S \vert \leq k$.
		If a graph $G$ has at least one total $[1, k]$-set then the cardinality of the smallest such set  is denoted by $\gamma_{t[1, k]}(G)$.
		We consider $[1, k]$-sets that are also independent. Note that not every graph has an independent $[1, k]$-set. For graphs having an independent $[1, k]$-set, we define  $[1, k]$-independence numbers which is denoted by $\gamma_{i[1, k]}(G)$.
		In this paper, we investigate the existence of $[1,k]$-sets in lexicographic products $G\circ H$. Furthermore, we  completely characterize graphs which their lexicographic product has  at least one total $[1,k]$-set. Also, we determine $\gamma_{[1, k]}(G\circ H)$, $\gamma_{t[1, k]}(G\circ H)$ and $\gamma_{i[1, k]}(G\circ H)$.
		Finally, we show that finding smallest total $[1, k]$-set is $NP$-complete.\\
		
		\noindent\textbf{Keywords:} Domination; Total Domination; $[1,k]$-set; Total $[1,k]$-set; Independent $[1,k]$-set; Lexicographic Products.
	\end{abstract}

	\section{Introduction}
	The concept of domination and dominating set is a well studied topic in graph theory and has many extensions and applications \cite{haynes1998fundamentals, hedetniemi1991topics}.
	The problem of finding the smallest dominating set of a given graph $G$ is an $NP$-complete problem. Beside practical applications, this problem has many theoretical applications, e.g. in the theory of $NP$-completeness, many problems are reduced to this one.
	Its practical applications also include location problem, sets of representatives, monitoring communication, electrical networks, social network theory and so on \cite{haynes1998fundamentals, haynes1998domination, hedetniemi1991topics}.
	
	Many variants of dominations have been proposed  and  surveyed  in the literature such as total domination \cite{henning2013total}, efficient and open efficient 	dominations \cite{bange1988efficient,  gavlas2002efficient}, $k$-dominations \cite{chellali2012k, haas2014k}, rainbow domination \cite{bresar2008rainbow} and others like \cite{allan1978domination , yannakakis1980edge,Chang201289 , haynes1998paired, haynes1998fundamentals,Zhao201428}. Most of these problems are shown to be $NP$-hard  \cite{bange1988efficient, Chang201289, haynes1998paired, Zhao201428}.

	In graph theory, constructing complex graphs from some simpler ones is challenging, however it has many applications. So, studying properties of such complex  graphs and relations between properties of their components  is an interesting topic studing. This topic has led to many long-standing open problems such as Vizing's conjecture on the domination number of Cartesian products \cite{hartnell1991vizing}. Standard products such as Cartesian, lexicographic and strong products have been studied and applied widely in many areas such as group theory, expander graphs and graph-based coding theory schemes \cite{hammack2011handbook, imrichklav, hoory2006expander}. Moreover, various types for dominating sets of products of graphs were intensively investigated in \cite{nowakowski1996associative, henning2013total, dorbec2006some , kuziak2014efficient , gravier1997domination , rall2005total , hartnell2004dominating, li2009total}.
	
	Recently, Chellali et al. have studied $[j, k]$-sets \cite{chellali20131}, independent $[1, k]$-sets \cite{chellali2014independent} and proposed  total $[j,k]$-sets  in graphs.
	% but their properties remain uncovered.
	They have also pointed out a number of open problems on $[1,2]$-dominating sets in \cite{chellali20131}. Some of those problems are solved by X. Yang et al. \cite{yang20141} and AK. Goharshady et al. \cite{goharshady20161}.

	In this paper, we study of total $[1,k]$-sets and independent $[1,k]$-sets of lexicographic products of graphs.
	
	The rest of the paper is organized as follows:  In Section \ref{Terminology}, we review some necessary terminology and notation. In Section \ref{total}, we determine  $[1,k]$-domination, total $[1,k]$-domination number and $[1,k]$-independence numbers for some some classes of graphs such as paths and cycles. In Section \ref{lexproduct}, we study total $[1,k]$-sets  of lexicographic product of graphs and then, we completely characterize graphs which their lexicographic product has  at least one total $[1,k]$-set. Then, we determine the structure of all total $[1,k]$-sets for these graphs. Moreover, we generalize these results to independent $[1,k]$-sets. In Section \ref{lexproductno}, we  determine  $[1,k]$-domination number in lexicographic product of two given graph, and total $[1,k]$-domination number and  $[1,k]$-independence number in lexicographic product of graphs. In Section \ref{complextysec}, we prove that finding a total $[1,2]$-set with minimum cardinality for a graph is $NP$-complete.

	\section{Terminology and Notation}\label{Terminology}
	In this section, we minimally review some required terminology and notation of graph theory. For notation and terminology that are not defined here, we refer the reader to \cite{west2001introduction}. In this paper, $G$ is assumed to be a simple graph with vertex set $V(G)$ and edge set $E(G)$ of order $n=\vert V(G)\vert$. For a vertex $v \in V(G)$, the degree $d_{G}(v)$, or simply $d(v)$, of $v$ is the number of edges that are incident to $v$ in $G$. We denote the minimum and maximum degrees of vertices in $G$ by $\delta(G)$ and $\Delta(G)$, respectively.
	The open neighborhood $N_G(v)$ of a vertex $v\in V(G)$ equals $\{u :
	\{u,v\} \in E(G)\}$ and its closed neighborhood $N_G[v]$  is defined $N_G(v) \cup \{v\}$. For a simple and undirected graph like $G$, $d(v) = |N_G(v)|$. The open (closed) neighborhood of $S \subseteq V$ is defined to be the union of open (closed) neighborhoods of vertices in $S$ and is denoted by $N(S)$ ($N[S]$).

	Let $S \subseteq V$ and $v\in V$. The spanning number of $v$ with respect to $S$, denoted as $\spann_{S}(v)$, is defined $\vert N_G(v)\cap S\vert $. Whenever there is no risk of misunderstanding, we omit $S$ and simply use  $\spann(v)$
	\\
	A set $S \subseteq V$,  is called a $k$-dependent set of $G$, if for each vertex $v \in S$, t is the case that $\vert N_G(v) \cap S\vert\leq k$. For $k=0$, $S$ is called a $0$-dependent or independent set of $G$.

	A set $D \subseteq V$ is called a dominating set of $G$ if for every $v \in V \setminus D$, there exists some vertex $u\in D$ such that $v \in N(u)$.
	The domination number of $G$ is the minimum number among cardinalities of all dominating sets of $G$ and is denoted by $\gamma(G)$.
	A set $D \subseteq V$ is called a total dominating set of $G$ if for every $v \in V$, there exists some vertex $u\in D$ such that $v \in N(u)$. Total domination number is the minimum number among cardinalities of all  total dominating sets of $G$ and is denoted by $\gamma_{t}(G)$.

	For two given integers $j$ and $k$ such that $j \leq k$, a subset $D \subseteq V$ is called a $[j, k]$-set (total $[j, k]$-set) if for every vertex $v \in V\setminus D$ ($v \in V$), $j \leq |N(v) \cap D| \leq k$. In other words, each vertex $v \in V \setminus D$ ($v \in V$) is adjacent to at least $j$ but not more than $k$ vertices in $D$.
	The $[j, k]$-domination number  (total $[j, k]$-domination number) of $G$ is the minimum number among cardinalities of $[j, k]$-sets (total $[j, k]$-sets) of $G$ and is denoted by $\gamma_{[j,k]}(G)$ ($\gamma_{t[j,k]}(G)$). Note that total $[j,k]$-sets might not exist for an arbitrary graph.
	\\
	The family of all graphs like $G$ which have at least one $[j,k]$-set is denoted by  $\mathcal{D}_{[j,k]}$. Similarly  the class of all graphs $G$ with at least one total $[j,k]$-set is denoted by $\mathcal{D}^t_{[j,k]}$.
		Other types of dominating sets, that we are used in this work are summarized  in the Table \ref{table1}.

	\begin{table}[h]
		\begin{center}
			\caption{Some types of domination studied in this paper where $S\subset V$ .}\label{table1}
			\begin{tabular}{lll}
				\hline
				Name &  $v \in V\setminus S$ &  $v \in S$\\
				\hline
				
				% after \\: \hline or \cline{col1-col2} \cline{col3-col4} ...
				$[1,k]$-set & $1\leq \vert N(v)\cap S \vert \leq k$ & -\\
				Independent $[1,k]$-set & $1\leq \vert N(v)\cap S \vert \leq k $ & $ \vert N(v)\cap S \vert =0$  \\
				$j$-dependent $[1,k]$-set & $1\leq \vert N(v)\cap S\vert \leq k$ & $0\leq \vert N(v)\cap S \vert \leq j$ \\
				Total $[1,k]$-set &  $1\leq\vert N(v)\cap S \vert \leq k$ & $1\leq \vert(v)\cap S \vert  \leq k$\\
				$j$-dependent total $[1,k]$-set &  $1\leq \vert N(v)\cap S\vert  \leq k$ & $1\leq \vert N(v)\cap S \vert \leq j$\\
				Efficient dominating &  $\vert N(v)\cap S \vert =1$ & $ \vert N(v)\cap S\vert=0$\\
				Open efficient dominating & $\vert N(v)\cap S\vert =1$ & $\vert N(v)\cap S\vert=1$ \\
			\end{tabular}
			
		\end{center}
	\end{table}

	\section{$[1,k]$-sets of Paths and Cycles } \label{total}
	In this section, first  we express some results about $[1,k]$-sets for paths and  cycles. Then, we determine $\gamma_{[1, k]}$, $\gamma_{t[1, k]}$,  $\gamma_{i[1, k]}$ for these graphs.

	\begin{lemma}\label{pathlem0}
		For every connected graph $G=(V,E)$  of order $n$ such that $\Delta(G) \geq n-2$, $\gamma_{t[1, 2]}(G)=\gamma_{t[1, k]}(G)=2$.
	\end{lemma}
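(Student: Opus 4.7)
The plan is to establish the lower bound and then exhibit an explicit total $[1,2]$-set of size $2$, which will also serve as a total $[1,k]$-set for every $k\geq 2$.

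For the lower bound, I will simply observe that in any total $[1,k]$-set $S$ each vertex of $S$ must itself have a neighbor in $S$, so a singleton cannot work and $\gamma_{t[1,k]}(G)\geq 2$. This immediately gives $\gamma_{t[1,2]}(G)\geq 2$ and $\gamma_{t[1,k]}(G)\geq 2$.

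For the upper bound, let $v$ be a vertex with $d(v)=\Delta(G)\geq n-2$, and split into two cases. If $d(v)=n-1$, then $v$ is adjacent to every other vertex; picking any neighbor $u$ of $v$ and setting $S=\{v,u\}$ makes both vertices of $S$ mutually adjacent (so $\spann_S(v)=\spann_S(u)=1$) and forces every $x\in V\setminus S$ to satisfy $1\leq \spann_S(x)\leq |S|=2$. If $d(v)=n-2$, there is a unique vertex $u\neq v$ with $u\notin N(v)$; connectedness of $G$ forces $u$ to have at least one neighbor $w$, and necessarily $w\in N(v)$ since $N(v)=V\setminus\{v,u\}$. Taking $S=\{v,w\}$, the two elements of $S$ are adjacent, and for $x\in V\setminus S$ either $x=u$ (adjacent to $w$) or $x\in N(v)\setminus\{w\}$ (adjacent to $v$); in either event $1\leq\spann_S(x)\leq 2$. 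So $S$ is a total $[1,2]$-set of size two.

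Since any total $[1,2]$-set is automatically a total $[1,k]$-set for $k\geq 2$, the construction yields $\gamma_{t[1,k]}(G)\leq \gamma_{t[1,2]}(G)\leq 2$, and combined with the lower bound we conclude $\gamma_{t[1,2]}(G)=\gamma_{t[1,k]}(G)=2$.

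The only delicate step is case (b): one must use connectedness to locate a vertex $w$ that simultaneously neighbors $v$ and $u$, so that the single non-neighbor of $v$ is still dominated by $S$. Everything else is a direct degree count, so I expect no serious obstacle.
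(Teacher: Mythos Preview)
Your proof is correct and follows essentially the same approach as the paper's: pick a vertex $v$ of maximum degree and pair it with a suitable neighbor $w$ so that $\{v,w\}$ is a total $[1,2]$-set. Your version is in fact more careful than the paper's, since you explicitly supply the lower bound $\gamma_{t[1,k]}(G)\geq 2$, separate out the case $d(v)=n-1$, and justify via connectedness why a common neighbor $w$ of $v$ and $u$ exists in the case $d(v)=n-2$ (the paper's proof appears to contain sign typos in the edge conditions on $w$).
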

	\begin{proof}
		Let $v\in V$ of degree $n-2$ and there is a vertex $u\in V$, such that $\{v,u\}\notin E$. Then, there exist a vertex $w$ such that $\{v,w\}\notin E$ and $\{w,u\}\notin E$. So $S=\{v,w\}$ is a total $[1,k]$-set for $G$.
	\end{proof}

	\begin{lemma}\label{pathlem00}
		Let $G$ be a connected graph with two adjacent vertices $u$ and $v$ such that $N[v]\cup N[u]=V(G)$. Then for any $k\geq 2$,  $D=\{u,v\}$ is a total $[1, k]$-set for $G$ and  $\gamma_{t[1, 2]}(G)= \gamma_{t[1, k]}(G)= 2$.
	\end{lemma}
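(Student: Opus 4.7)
The plan is to verify directly that $D=\{u,v\}$ is a total $[1,k]$-set, and then to argue that no set of smaller cardinality qualifies. For the upper bound, I would simply observe that $|D|=2\leq k$, so $\spann_{D}(w)=|N(w)\cap D|\leq 2\leq k$ holds for every $w\in V(G)$ automatically. For the lower bound I would split on whether $w\in D$ or not. If $w=u$, then the edge $\{u,v\}\in E(G)$ puts $v\in N(u)\cap D$, giving $\spann_{D}(u)\geq 1$; the case $w=v$ is symmetric. If $w\notin D$, the hypothesis $N[u]\cup N[v]=V(G)$ forces $w\in N[u]\cup N[v]$, and since $w\neq u,v$ this strengthens to $w\in N(u)\cup N(v)$, so $w$ has a neighbor inside $D$.

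For the equality $\gamma_{t[1,k]}(G)=2$, I would invoke the trivial lower bound $\gamma_{t[1,k]}(G)\geq 2$: a singleton $\{x\}$ cannot be total $[1,k]$-dominating because then $|N(x)\cap\{x\}|=0<1$. Combined with the witness $D$ above this gives equality. Since the same $D$ works uniformly for every $k\geq 2$, in particular for $k=2$, both numbers $\gamma_{t[1,2]}(G)$ and $\gamma_{t[1,k]}(G)$ collapse to $2$ simultaneously. I do not anticipate any real obstacle here; the statement is essentially a bookkeeping check, and the only bit of care needed is to exclude the degenerate possibilities $w=u$ and $w=v$ before deducing $w\in N(u)\cup N(v)$ from the closed-neighborhood covering hypothesis.
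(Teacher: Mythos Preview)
Your proof is correct and complete. The paper's own proof consists of the single line ``By Lemma~\ref{pathlem0}, the proof is clear,'' i.e.\ it defers to the previous lemma about connected graphs with $\Delta(G)\geq n-2$. That citation is loose: the hypothesis $N[u]\cup N[v]=V(G)$ with $u,v$ adjacent does \emph{not} in general force $\Delta(G)\geq n-2$ (for instance, take $u,v$ adjacent and give each of them two further private neighbours on six vertices; then $\Delta=3<n-2=4$). So your direct verification---bounding $\spann_{D}(w)\leq |D|=2$ from above and using the covering hypothesis plus the edge $\{u,v\}$ from below---is not only a valid alternative but in fact more self-contained than the paper's appeal to the earlier lemma. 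The lower bound $\gamma_{t[1,k]}(G)\geq 2$ via the impossibility of a total singleton is exactly the right way to close the argument.
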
	
	\begin{proof}
		By lemma \ref{pathlem0}, the proof is clear.
	\end{proof}
	\begin{lemma}\label{pathlem000}
		Let $T$ be a tree such that the complement of $T$, $\overline{T}$ is a connected graph.  Then, $\gamma_{t[1, 2]}(\overline{T})=\gamma_{t[1, k]}(\overline{T})=2$.
	\end{lemma}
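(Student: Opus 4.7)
The plan is to reduce to Lemma \ref{pathlem00} by exhibiting two vertices $u,v$ that are adjacent in $\overline{T}$ and whose closed neighbourhoods in $\overline{T}$ cover $V(\overline{T})$. Translating to the tree $T$, this amounts to finding two vertices $u,v\in V(T)$ with $uv\notin E(T)$ and with no common neighbour in $T$ (every $w\notin\{u,v\}$ then fails to be adjacent to at least one of $u,v$ in $T$, so it lies in $N_{\overline{T}}(u)\cup N_{\overline{T}}(v)$).

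First I would observe that the hypothesis that $\overline{T}$ is connected excludes the star: indeed, $\overline{K_{1,n-1}}$ has the center as an isolated vertex, so it is disconnected; and among trees of diameter $\leq 2$ (namely $K_1$, $K_2$, and stars $K_{1,n-1}$) none give a connected complement with at least two vertices. Hence $T$ must have diameter $\operatorname{diam}(T)\geq 3$.

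Next, let $u=v_0,v_1,\ldots,v_d=v$ be a diametral path in $T$, so $d_T(u,v)=d\geq 3$. Since the distance between $u$ and $v$ in $T$ is at least $3$, we have $uv\notin E(T)$, and $u$ and $v$ share no common neighbour in $T$ (a common neighbour would produce a $u$--$v$ walk of length $2$, contradicting $d_T(u,v)\geq 3$). Consequently, in $\overline{T}$ the vertices $u$ and $v$ are adjacent, and every vertex $w\in V(T)\setminus\{u,v\}$ is non-adjacent in $T$ to at least one of $u,v$, i.e.\ $w\in N_{\overline{T}}(u)\cup N_{\overline{T}}(v)$. Therefore $N_{\overline{T}}[u]\cup N_{\overline{T}}[v]=V(\overline{T})$, and Lemma \ref{pathlem00} applied to $\overline{T}$ with the adjacent pair $\{u,v\}$ yields $\gamma_{t[1,2]}(\overline{T})=\gamma_{t[1,k]}(\overline{T})=2$.

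The only potentially subtle point is the classification step that $\overline{T}$ connected forces $\operatorname{diam}(T)\geq 3$; everything else is a straightforward application of the diameter endpoints and the previous lemma. I do not expect any obstacle beyond carefully excluding the star and $K_2$ cases.
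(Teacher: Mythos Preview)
Your proof is correct and follows essentially the same strategy as the paper: both reduce to Lemma~\ref{pathlem00} by exhibiting two vertices of $T$ that are non-adjacent and share no common neighbour in $T$. The paper simply selects two leaves of $T$ (tacitly using that the connectedness of $\overline{T}$ rules out $K_1$, $K_2$, and stars, so two leaves with distinct neighbours exist), whereas you make this exclusion explicit and then take the endpoints of a diametral path---which are in fact leaves at distance $\geq 3$---so the two arguments coincide, yours being the more carefully justified version.
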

	\begin{proof}
		Since each tree like $T$ has at least two leaves like $v$ and $u$. Then  $\{v,u\}\in E(\overline{T})$ and  $N[v]\cup N[u]=V(\overline{T})$. So  $\gamma_{t[1, 2]}(\overline{T})=\gamma_{t[1, k]}(\overline{T})=2$.
	\end{proof}
	\begin{lemma}\label{pathlem}
		
		Let $G$ be a non-trivial path $P_n$ or a cycle $C_n$. Then
		
		\begin{equation*}
		\gamma_{t[1, k]}(G) = \left\{
		\begin{array}{ll}
		\frac{n}{2}\; & \text{if } n\equiv 0\mod 4\\
		\frac{n+1}{2}\; & \text{if } n \equiv 1 \mod 4\\
		\frac{n+2}{2}\; & \text{if } n\equiv 2 \mod 4\\
		\frac{n+1}{2}\; & \text{if } n \equiv 3 \mod 4,\\
		\end{array} \right.
		\end{equation*}
		and
		\begin{equation*}
		\gamma_{[1, k]}(G) =\gamma_{i[1, k]}(G)= \lceil \frac{n}{3}\rceil.
		\end{equation*}
	\end{lemma}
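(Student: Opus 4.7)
The plan is to reduce the lemma to the classical values of the domination number and the total domination number of paths and cycles. The key observation is that $\Delta(P_n) \leq 2$ and $\Delta(C_n) \leq 2$, so for any $k \geq 2$ the upper constraint $|N(v) \cap S| \leq k$ appearing in the definitions of $[1,k]$-sets, independent $[1,k]$-sets, and total $[1,k]$-sets is automatically satisfied, regardless of the choice of $S$. Consequently, on $P_n$ and $C_n$ these three notions coincide with ordinary dominating sets, independent dominating sets, and total dominating sets, respectively, and the problem becomes purely one of computing the corresponding classical parameters.

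For $\gamma_{[1,k]}(G) = \gamma_{i[1,k]}(G) = \lceil n/3 \rceil$, I would label the vertices $v_1, \dots, v_n$ along the path or cycle and take $S = \{v_2, v_5, v_8, \dots\}$, padded with one extra vertex near the end when $n \not\equiv 0 \pmod 3$. This $S$ is independent (consecutive selected vertices are at distance $3$), dominates all of $V(G)$, and has cardinality $\lceil n/3 \rceil$, giving $\gamma_{i[1,k]}(G) \leq \lceil n/3 \rceil$. For the lower bound, each vertex of a dominating set covers at most three vertices (itself and its two neighbors), so $\gamma_{[1,k]}(G) \geq \lceil n/3 \rceil$. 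Combining with the trivial inequality $\gamma_{i[1,k]}(G) \geq \gamma_{[1,k]}(G)$ yields both equalities simultaneously.

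For $\gamma_{t[1,k]}(G)$ I would split into the four residue classes of $n$ modulo $4$. The upper bound uses a block construction: partition the vertices into consecutive blocks of length four and, within each complete block $v_{4i+1}v_{4i+2}v_{4i+3}v_{4i+4}$, choose the middle pair $\{v_{4i+2}, v_{4i+3}\}$. Since these two vertices are adjacent, each has a neighbor in $S$, and together they totally dominate the whole block. When $n$ is not a multiple of $4$ one adjoins one or two extra vertices near the last block (or where the cycle closes) to handle the leftover vertices, producing exactly the stated correction terms $+1, +2, +1$ in the respective residue classes.

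The matching lower bound is the hardest step. The crude counting argument---each $u \in S$ totally dominates at most $d(u) \leq 2$ vertices, so $|S| \geq n/2$---is sharp only when $n \equiv 0 \pmod 4$. For the other residues, I would argue that a total dominating set of size exactly $n/2$ would force $G[S]$ to be a perfect matching whose edges, together with their external neighbors, tile $V(G)$ into disjoint groups of four consecutive vertices; this is only possible when $n \equiv 0 \pmod 4$. A careful case analysis on how $S$ can be arranged into maximal runs of consecutive vertices, combined with the observation that each such run of length $\ell$ is followed by at most two externally dominated vertices before the next run must start, then yields the additional $+1$ or $+2$ term in the remaining residues.
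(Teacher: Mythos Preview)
Your approach is essentially the same as the paper's: both proofs hinge on the observation that $\Delta(G)\le 2$ makes the upper bound in the $[1,k]$-condition vacuous for $k\ge 2$, so $\gamma_{[1,k]}(G)=\gamma(G)$, $\gamma_{i[1,k]}(G)=i(G)$, and $\gamma_{t[1,k]}(G)=\gamma_t(G)$. The only difference is one of detail: the paper simply invokes the known closed forms for $\gamma(P_n),\gamma(C_n),\gamma_t(P_n),\gamma_t(C_n)$ and the fact that paths and cycles admit an independent $\gamma$-set, whereas you re-derive these values via explicit constructions and counting/tiling lower bounds.
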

	
	\begin{proof}
		Let  $n > 2$. Since $\Delta(P_n)=\Delta(C_n)=2$, then total dominating sets are total $[1,2]$-sets, too.
		Obviously, we have $\gamma_{t[1,k]}(G)=\gamma_{t[1,2]}(G)=\gamma_{t}(G)$. It is easy to see that   $ \gamma_{t}(G)$ is equal to the claimed amount.
		Moreover, each $\gamma$-set for a path or cycle is independent, so
		$$\gamma_{i[1,k]}(G)=\gamma_{[1,k]}(G)=\gamma_{[1,2]}(G)=\gamma(G)= \lceil \frac{n}{3}\rceil.$$
	\end{proof}

\section{Total $[1,2]$-sets of Lexicographic Products of Graphs} \label{lexproduct}
The lexicographic product  of graphs $G$ and $H$, denoted by $G\circ H$ is a graph with the vertex set  $V(G \circ H) = V(G) \times V(H)$ and two vertices $(g,h)$ and $(g',h')$ are adjacent in $G \circ H$ if and only if either $\{g,g'\}\in E(G)$ or $g=g'$ and $\{h,h'\} \in E(H)$.
\\
 Note that if $G$ is not connected, then $G \circ H$ is not connected, too. So in this section, we always assume that $G$ is a connected graph.
\\
 In this section, we investigate properties of graphs $G$ and $H$ such that $G \circ H$ has a total $[1,2]$-set. Then we extend these results to total $[1,k]$-set. %There exist some graphs $G \in  \mathcal{D}^t_{[1,2]}$ but $G \circ H \notin \mathcal{D}^t_{[1,2]}$, and there exist some graphs $G \notin  \mathcal{D}^t_{[1,2]}$ but $G \circ H \in \mathcal{D}_{[1,2]}$.
 Note that, it is possible that  $G\in \mathcal{D}^t_{[1,2]}$, however  $G \circ H\notin \mathcal{D}^t_{[1,2]}$, or vice versa.
\begin{example}
 Let $G_1, G_2$ and $G_3$ be graphs that are shown  in Figure \ref{fig:lex}.
 It can be verified that $S=\{a,f,g,h\}$ is an efficient dominating set of $G_1$ and $G_1 \notin \mathcal{D}^t_{[1,2]}$.  Let $S'$ be a $\gamma_{t[1,2]}$-set of size two for an arbitrary  graph like $H \in \mathcal{D}^t_{[1,2]}$. Then, $S\times S'$ is a $\gamma_{t[1,2]}$-set for  $G_1 \circ H$.
 \\
It is easy to see that $G_2 \notin \mathcal{D}^t_{[1,2]}$. As we will show in Theorem \ref{lexthem},  there is not any graph $H$ such that $G_2 \circ H \notin \mathcal{D}^t_{[1,2]}$.
  \\
  $S=\{u_1,u_3,u_5\}$ is a total ${[1,2]}$-set for $G_3$. If $H$ contains an isolated vertex such as $v_i$, then $S\times \{v_i\}$  is a total $[1,2]$-set for $G_3 \circ H$.

 \begin{figure}[h!]
   \centering
     \includegraphics[width=0.65\textwidth]{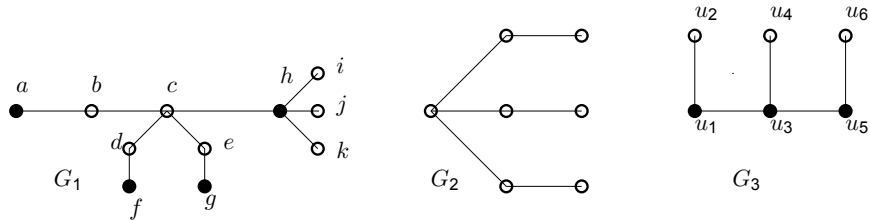}
      \caption{Graphs $G_1,G_3$ and $G_3$ where$G_1 , G_2 \notin \mathcal{D}^t_{[1,2]}$ and $G_3 \in \mathcal{D}^t_{[1,2]}$}
       \label{fig:lex}
\end{figure}
 \end{example}

\begin{definition}
Let $H$ and $G$  be graphs. The sets $G^{h_0}=\{(g,h_0)\in V(G \circ H): g\in V(G)\}$ and  $H^{g_0} = \left\{ (g_0,h) \in V(G \circ H) : h \in V(H) \right\}$ are called $G_{-}$Layer and $H_{-}$Layer respectively.
\end{definition}
\begin{lemma}\label{Lexneighbour}
Let $v$ and $v'$ be two adjacent vertices of $G$ and $u,u' \in V(H)$. Then
\begin{equation*}\label{BBB}
\begin{array}{ll}
N_{G \circ H}((v,u))\cup N_{G \circ H}((v',u))&=N_{G \circ H}((v,u'))\cup N_{G \circ H}((v',u'))\\&
=N_{G \circ H}((v,u))\cup N_{G \circ H}((v',u')).
\end{array}
\end{equation*}
\end{lemma}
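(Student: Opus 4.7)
My plan is to compute each of the three unions explicitly in terms of the vertex sets and neighborhoods of $G$ and $H$, and observe that they all collapse to the same set, which in fact does not depend on $u$ or $u'$ at all.

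First I would invoke the definition of the lexicographic product to write down the open neighborhood of an arbitrary vertex $(v,u)\in V(G\circ H)$ in the convenient form
\[
N_{G\circ H}((v,u))=\bigl(\{v\}\times N_H(u)\bigr)\;\cup\;\bigl(N_G(v)\times V(H)\bigr).
\]
This decomposition separates the edges coming from ``same $G$-coordinate, adjacent in $H$'' from those coming from ``adjacent $G$-coordinate, anything in $H$.''

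The key observation is that because $v$ and $v'$ are adjacent in $G$, we have $v'\in N_G(v)$ and $v\in N_G(v')$. Consequently $\{v'\}\times V(H)\subseteq N_G(v)\times V(H)$ and $\{v\}\times V(H)\subseteq N_G(v')\times V(H)$. The ``$H$-layer'' pieces $\{v\}\times N_H(\cdot)$ and $\{v'\}\times N_H(\cdot)$ are therefore already absorbed into $N_G(v')\times V(H)$ and $N_G(v)\times V(H)$ respectively. Applying this absorption to each of the three expressions in the statement, every one of them simplifies to
\[
\bigl(N_G(v)\cup N_G(v')\bigr)\times V(H),
\]
which is manifestly independent of the choices of $u,u'\in V(H)$. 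The three equalities then follow at once.

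There is no real obstacle here: the proof is a short, direct computation from the definition, and the only thing to be careful about is the absorption step, which relies crucially on the adjacency hypothesis $\{v,v'\}\in E(G)$. I would include one or two lines making this dependence explicit so the reader sees why adjacency (rather than mere equality or non-adjacency) is exactly what forces the three unions to agree.
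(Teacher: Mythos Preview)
Your proposal is correct and follows essentially the same approach as the paper: both decompose $N_{G\circ H}((v,u))$ into an ``$H$-part'' sitting in the $v$-layer and a ``$G$-part'' consisting of full $H$-layers over $N_G(v)$, then use the adjacency $\{v,v'\}\in E(G)$ to absorb the $H$-parts into the $G$-parts, arriving at the $u,u'$-independent set $(N_G(v)\cup N_G(v'))\times V(H)$ (written in the paper as $\bigcup_{v_i\in N_G(\{v,v'\})}V(H^{v_i})$). The only difference is notational.
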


\begin{proof}

We know that
\begin{equation*}
N_{G \circ H}((v,u))= \bigcup_{v_i \in N_G(v)}V(H^{v_i}) \cup \{(v,u_j): u_j \in N_H(u)\},
\end{equation*}
so
\begin{equation}\label{BB1}
\begin{array}{ll}
N_{G \circ H}((v,u)) \cup N_{G \circ H}((v',u'))= & (\bigcup_{v_i \in N_G(v)}V(H^{v_i})) \cup \{(v,u_j) : u_j \in N_H(u)\} \cup \\ &
(\bigcup_{v_i \in N_G(v')}V(H^{v_i})) \cup \{(v',u_j): u_j \in N_H(u')\}.
\end{array}
\end{equation}
It is easy to see that
\begin{equation}\label{BB2}
\{(v, u_j) : u_j \in N_H(u)\}\subseteq V(H^v).
\end{equation}
and
\begin{equation}\label{BB3}
\{(v',u_j): u_j \in N_H(u')\}\subseteq V(H^{v'}).
\end{equation}
By hypotheses  $\{v,v'\} \in E(G)$, we have

\begin{equation}\label{BB4}
\begin{array}{l}
V(H^v) \subseteq  N_{G \circ H}((v',u')),\\
V(H^{v'}) \subseteq  N_{G \circ H}((v,u)).
\end{array}
\end{equation}
 So by Relations \ref{BB1}, \ref{BB2}, \ref{BB3} and \ref{BB4}, it is implied that
$$N_{G \circ H}((v,u)) \cup N_{G \circ H}((v',u'))=
\bigcup_{v_i \in N_G(\{v, v'\})}V(H^{v_i}).$$
The  equation above shows that the union of neighbors of the vertices $(v,u)$ and $(v',u')$ is independent from $u$ and $u'$. Therefore, we have
$$N_{G \circ H}((v,u))\cup N_{G \circ H}((v',u))=N_{G \circ H}((v,u'))\cup N_{G \circ H}((v',u'))=N_{G \circ H}((v,u))\cup N_{G \circ H}((v',u')).$$
\end{proof}

%%%%%%%%%%%%%%%%%%%%%%%%%%%%%%%%%%%%%%%%%%%%%%%%%%%%%%%%%%%%%%%%

\begin{lemma}\label{Lex3}
Let $D$ be a total $[1,2]$-set for $G \circ H \in \mathcal{D}^t_{[1,2]}$ which  contains more than two vertices of a $H_{-}$Layer $ H^v$. Then $G=K_1$ and $H \in \mathcal{D}^t_{[1,2]}$.
\end{lemma}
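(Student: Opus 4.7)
The plan is a direct contradiction argument, leveraging the structure of $H$-layers in the lexicographic product. First I would suppose for contradiction that $G\neq K_1$. Since $G$ is connected and has at least two vertices, the vertex $v$ has some neighbor $v'\in N_G(v)$. Now pick any three distinct vertices $(v,u_1),(v,u_2),(v,u_3)\in D\cap H^v$, guaranteed by the hypothesis that $|D\cap H^v|>2$.

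The key observation (essentially a special case of Lemma~\ref{Lexneighbour}) is that in $G\circ H$, whenever $\{v,v'\}\in E(G)$, every vertex of the layer $H^{v'}$ is adjacent to \emph{every} vertex of the layer $H^v$, because adjacency between $(v',h)$ and $(v,u_i)$ is determined solely by the $G$-coordinates $v',v$. Consequently, for any $h\in V(H)$ the vertex $(v',h)$ satisfies
\begin{equation*}
\{(v,u_1),(v,u_2),(v,u_3)\}\subseteq N_{G\circ H}((v',h))\cap D,
\end{equation*}
so $\spann_D((v',h))\geq 3$. This contradicts the assumption that $D$ is a total $[1,2]$-set, since the spanning number of every vertex must be at most $2$. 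Therefore $G=K_1$.

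Once $G=K_1$, it is immediate from the definition of the lexicographic product that $G\circ H$ is isomorphic to $H$: the vertex set is $\{v\}\times V(H)$ and, since there are no edges in $G$, two vertices $(v,h)$ and $(v,h')$ are adjacent in $G\circ H$ iff $\{h,h'\}\in E(H)$. Under this isomorphism $D$ (viewed as a subset of $V(H)$) is a total $[1,2]$-set of $H$, which is exactly the statement $H\in\mathcal{D}^t_{[1,2]}$. The only mildly delicate step is the adjacency claim for whole layers, but it is a direct consequence of the definition and of Lemma~\ref{Lexneighbour}, so no real obstacle arises.
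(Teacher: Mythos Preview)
Your argument is correct and follows essentially the same route as the paper: pick three vertices of $D$ in a single $H$-layer $H^v$, observe that any neighbor $v'$ of $v$ in $G$ would force every vertex of $H^{v'}$ to have at least three neighbors in $D$, contradict the total $[1,2]$ condition, and then use connectedness of $G$ to conclude $G=K_1$ and hence $G\circ H\cong H$. The only difference is cosmetic---you spell out the layer-adjacency observation and the isomorphism $G\circ H\cong H$ a bit more explicitly than the paper does.
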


\begin{proof}
	Suppose $D$ be a total $[1,2]$-set of $G \circ H$ that contains vertices $(x,v), (y,v)$ and $(z,v)$ where $v\in V(G)$ and $x,y,z \in V(H)$. If there exists a vertex $v'\in V(G)$ such that $\{v,v'\}\in E(G)$, then all vertices of $H^{v'}$ are dominated by three vertices $(x,v), (y,v)$ and  $(z,v) $. This is a contradiction. So there is not any vertex adjacent to $v$.
	Since $G$ is a connected graph, $G=K_1=(\{v\},\emptyset)$ and $S=\{u\;:\;(v,u)\in D\}$ is a total $[1,2]$-set for $H$ and hence $H \in  \mathcal{D}^t_{[1,2]}$.
\end{proof}

Let $G$ be a nontrivial connected  graph and $G \circ H \in \mathcal{D}^t_{[1,2]}$. Then, every total $[1,2]$-set of  $G \circ H$ has  at most two vertices of each $H_{-}$Layer. For a total $[1,2]$-set $D$, we define
 $A_1^D$ as $\{(v,u):\vert V(H^v)\cap D\vert =1\}$ and $A_2^D$ as $\{(v,u):\vert V(H^v)\cap D\vert =2\}$.
The set $D$ satisfies in one of the following conditions:
\begin{enumerate}
  \item [1)]
  $A_1^D=\emptyset$,

\item [2)]
 $A_1^D \neq \emptyset$ and $A_2^D\neq \emptyset$,

 \item [3)]
 $A_2^D=\emptyset$.

\end{enumerate}

\begin{lemma}\label{Lex1}

Let $D$ be a total $[1,2]$-set of $G \circ H\in \mathcal{D}^t_{[1,2]}$ such that $A_2^D= \emptyset$. Then, $S=\{u:(u,v)\in D\}$ is a total $[1,2]$-set for $G$. In addition, if there is a  vertex $u \in S$ such that $\vert N(u) \cap S\vert = 2$;
%that is  $S$ is a $1$-dependent total $[1,2]$-set of $G$;
 then $H$ contains an isolated vertex.
\end{lemma}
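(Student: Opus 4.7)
The plan is to first reduce to the situation where every $H$-layer contains at most one $D$-vertex. This follows immediately: the standing assumption that $G$ is nontrivial together with Lemma \ref{Lex3} rules out layers containing three or more $D$-vertices, and the hypothesis $A_2^D=\emptyset$ rules out layers with exactly two. With this in hand I can unambiguously write, for each $u\in S$, the unique vertex $(u,v_u)\in H^u\cap D$, so that the $D$-neighborhood of any $(u,v)$ in $G\circ H$ decomposes as exactly $|N_G(u)\cap S|$ vertices coming from the neighboring $G$-layers plus at most one further vertex (namely $(u,v_u)$, and only when $u\in S$ and $\{v,v_u\}\in E(H)$).

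Next I would establish that $S$ is a total $[1,2]$-set of $G$ by bounding $|N_G(u)\cap S|$ above and below for an arbitrary $u\in V(G)$. For the upper bound, pick any $v\in V(H)$; every $u'\in N_G(u)\cap S$ contributes the vertex $(u',v_{u'})$ to $N_{G\circ H}((u,v))\cap D$, and since this set has cardinality at most $2$, I obtain $|N_G(u)\cap S|\leq 2$ immediately. For the lower bound I would split cases. If $u\notin S$, then $H^u\cap D=\emptyset$, so every $D$-neighbor of $(u,v)$ must lie in a $G$-neighbor layer, forcing $|N_G(u)\cap S|\geq 1$. If $u\in S$, I look instead at the vertex $(u,v_u)$ itself: since $D$ is total, $(u,v_u)$ must have at least one $D$-neighbor, and the only candidate inside $H^u$ is itself (excluded), so again some $u'\in N_G(u)\cap S$ must exist.

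For the addendum, assume $u\in S$ and $|N_G(u)\cap S|=2$. The two neighboring $G$-layers already contribute two $D$-vertices to $N_{G\circ H}((u,v'))\cap D$ for every $v'\in V(H)$, so the upper bound of $2$ leaves no room for $(u,v_u)$ to appear as an extra neighbor; that is, $\{v',v_u\}\notin E(H)$ for every $v'\in V(H)\setminus\{v_u\}$, which is exactly the statement that $v_u$ is isolated in $H$. The only real obstacle here is the bookkeeping of the $H^u$ contribution across the cases $u\in S$ versus $u\notin S$ and across different choices of $v$; once the one-vertex-per-layer reduction is in place, each inequality collapses to a one-line count and the existence of the isolated vertex of $H$ falls out directly from saturating the upper bound on any $(u,v')$ with $v'\neq v_u$.
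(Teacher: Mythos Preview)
Your proposal is correct and follows essentially the same idea as the paper: translate the $D$-neighborhood count of a vertex $(u,v)$ in $G\circ H$ into the $S$-neighborhood count of $u$ in $G$, using that each $H$-layer meets $D$ in at most one vertex. The paper argues by contradiction through three cases, while you set up the decomposition $|N_{G\circ H}((u,v))\cap D| = |N_G(u)\cap S| + \varepsilon$ (with $\varepsilon\in\{0,1\}$ the possible in-layer contribution) and read off the bounds directly; these are the same computation packaged differently. Your version is in fact more complete: the paper's written proof omits both the upper-bound case for $u\in S$ and, more notably, the entire proof of the ``in addition'' clause about $H$ having an isolated vertex, whereas your saturation argument at $|N_G(u)\cap S|=2$ handles that cleanly.
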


\begin{proof}
The proof is by contradiction. Assume $D$ is a total $[1,2]$-set of $G\circ H$ with $A_2^D=\emptyset$ and $S=\{u:(u,v)\in D\}$ is not a total set of $G$. Then, we have  three cases to consider.

	\begin{enumerate}

\item
There exists a vertex like $u\in S$ such that $\vert N(u) \cap S \vert=0$. It means that there is no vertex $u' \in N_G(u)$ such that $u'\in S$. The set $D$ is a total $[1,2]$-set and $u\in S$, so there exists a vertex $v\in V(H)$ such that $(u,v)\in D$. Similarly there exists a vertex $v' \in V(H)$ such that $(u,v')\in D$. This is a contradiction against $A_2^D=\emptyset$.
\item
There exists a vertex like $w\in V(G)\setminus S$ such that $\vert N_G(w) \cap S \vert=0$. Then, there is no vertex like $v\in V(H)$ such that $(u,v)\in D$. Moreover, there is no vertex $w' \in N_G(w)$ such that $w'\in S$.  Therefore vertices of $H^w$ can not be dominated by any vertex in $D$, which is a contradiction.
 \item
There exists a vertex like $w\in V(G)\setminus S$ such that $\vert N(w) \cap S \vert >2$. Then, there are at least three distinct vertices $w',w'',w'''\in N_G(w)\cap S$. By the definition of   $S$, there are vertices $v',v'',v''' \in V(H)$ such that $(w',v'),(w'',v''), (w''',v''')\in D$. These vertices dominate all vertices of $H^w$, which is a contradiction.
\end{enumerate}
\end{proof}

\begin{lemma}\label{Lex1and2a}
Let $G \circ H \in \mathcal{D}^t_{[1,2]}$ and $H$ does not contain any isolated vertex. Then, there exists either a $1$-dependent total $[1,2]$-set for $G$ or for each total $[1,2]$-set $D$ of $G$, $A_1^D=\{(v,u):\vert V(H^v)\cap D\vert=1\}\neq \emptyset$ and $A_2^D=\{(v,u):\vert V(H^v)\cap D\vert =2\} \neq \emptyset $.
\end{lemma}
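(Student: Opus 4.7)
The plan is to prove the contrapositive: assume there exists a total $[1,2]$-set $D$ of $G \circ H$ with $A_1^D = \emptyset$ or $A_2^D = \emptyset$, and exhibit a $1$-dependent total $[1,2]$-set. The two subcases behave quite differently and are treated in turn.

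In the subcase $A_2^D = \emptyset$, Lemma \ref{Lex1} applies directly: it gives that $S := \{u \in V(G) : (u,v) \in D \text{ for some } v\}$ is a total $[1,2]$-set of $G$, and its second clause shows that $|N_G(u) \cap S| = 2$ for some $u \in S$ would force an isolated vertex in $H$. Since $H$ has none, $|N_G(u) \cap S| \le 1$ for every $u \in S$; combined with totality this forces equality to $1$, so $S$ is a $1$-dependent total $[1,2]$-set of $G$.

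In the subcase $A_1^D = \emptyset$, Lemma \ref{Lex3} and the assumption force every $H$-layer to meet $D$ in exactly $0$ or exactly $2$ vertices; write $H^v \cap D = \{(v,a_v),(v,b_v)\}$ for $v \in S := \{v : |H^v \cap D| = 2\}$. The key counting identity, valid because, by Lemma \ref{Lexneighbour}, each $v' \in N_G(v) \cap S$ contributes its full pair of $D$-vertices to the neighborhood of $(v,u)$, is
\[
|N_{G \circ H}((v,u)) \cap D| \;=\; 2\,|N_G(v) \cap S| \;+\; [v \in S] \cdot |\{a_v,b_v\} \cap N_H(u)|.
\]
Enforcing $1 \le \cdot \le 2$ separately for $v \notin S$; for $v \in S$ with $u \in \{a_v,b_v\}$; and for $v \in S$ with $u \notin \{a_v,b_v\}$, forces $|N_G(v) \cap S| = 1$ for $v \notin S$; $|N_G(v) \cap S| \in \{0,1\}$ for $v \in S$, with the value $1$ excluded because it would require $a_v, b_v$ to have no $H$-neighbors (contradicting the no-isolated-vertex hypothesis); and $\{a_v, b_v\}$ to be a dominating edge of $H$. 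Substituting back, every $(v,a_v), (v,b_v) \in D$ has exactly one $D$-neighbor, so $D$ itself is a $1$-dependent total $[1,2]$-set of $G \circ H$.

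The main obstacle is the subcase $A_1^D = \emptyset$: extracting the structural rigidity from the single counting identity above requires simultaneously using both ends of the $[1,2]$-bound and invoking the no-isolated-vertex hypothesis on $H$ at exactly the right step to eliminate the intermediate value $|N_G(v) \cap S| = 1$ for $v \in S$. The subcase $A_2^D = \emptyset$ is by contrast a one-step corollary of Lemma \ref{Lex1}.
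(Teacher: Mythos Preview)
Your treatment of the subcase $A_2^D = \emptyset$ is correct and coincides with the paper's proof: Lemma~\ref{Lex1} yields a total $[1,2]$-set $S$ of $G$, and the absence of isolated vertices in $H$ rules out $|N_G(u)\cap S| = 2$ for $u\in S$, so $S$ is $1$-dependent. This is in fact the only subcase the paper's own proof addresses.

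In the subcase $A_1^D = \emptyset$, however, your conclusion is off target. The counting identity and the ensuing case analysis are internally correct, but at the end you exhibit a $1$-dependent total $[1,2]$-set \emph{of $G\circ H$} (namely $D$ itself), whereas the lemma asks for one \emph{of $G$}. Your own work shows that the projection $S = \{v : |H^v\cap D| = 2\}$ satisfies $|N_G(v)\cap S| = 0$ for every $v\in S$; that is, $S$ is an efficient dominating set of $G$ (this is precisely Lemma~\ref{Lex2a}). Being independent, $S$ is not total in $G$, and nothing in your argument produces any total $[1,2]$-set of $G$ from this configuration. Indeed no such set need exist: the paper's own example $G_1$ has an efficient dominating set but $G_1\notin\mathcal{D}^t_{[1,2]}$; taking any $H$ with $\gamma_{t[1,2]}(H)=2$ and no isolated vertex (say $H=K_2$) gives $G_1\circ H\in\mathcal{D}^t_{[1,2]}$ with a total $[1,2]$-set $D$ having $A_1^D=\emptyset$, yet $G_1$ admits no total $[1,2]$-set whatsoever. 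So the $A_1^D=\emptyset$ subcase cannot be closed toward the stated conclusion; the paper's proof simply omits this subcase rather than resolving it.
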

\begin{proof}
Let  $D$ be a total $[1,2]$-set  of $G \circ H $ which contains at most one vertex from each $H_{-}$Layer. Since $H$ does not contain any isolated vertex then by Lemma \ref{Lex1} there is a $1$-dependent  total $[1,2]$-set like $S$ for $G$ such that $S=\{v: (v,u)\in D\}$  and  $A_2^D=\emptyset$.
\end{proof}

 For a given graph $G \circ H \in \mathcal{D}^t_{[1,2]}$ and a total $[1,2]$-set $D$ of  $G \circ H$ where $A_2^D\neq \emptyset$, we define the set $B^D$ as
 $B^D=\{\{u',u''\}:(v,u'),(v,u'')\in A_2^D\}.$
\begin{lemma}\label{Lex1and2}
Let $G \circ H \in \mathcal{D}^t_{[1,2]}$ where $H$ does not contain any isolated vertex and for any total $[1,2]$-set $D$ of $G\circ H$, $A_1^D\neq \emptyset$ and $A_2^D\neq \emptyset $. Then,  the following conditions  hold:
\begin{enumerate}
  \item [1)]
 Every element of  $B^D$ is a total $[1,2]$-set for $H$.
\item [2)]
The set $S'=\{v: (v,u) \in D\}$ is a $1$-dependent $[1,2]$-set for $G$.
 \item [3)]
 If there is a vertex $v\in S'$ such that $\vert N(v)\cap S'\vert =0$ then $dist_G(v,v')\geq 3$ for every $v' \in S'\setminus \{v\}$.
% , where $dist_G(v,v')$ is minimum distance between two vertices $v$ and $v'$.
 \end{enumerate}
 \end{lemma}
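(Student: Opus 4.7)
The plan is to handle the three conclusions in order, using throughout the decomposition
\[
|N_{G\circ H}((v,u))\cap D| \;=\; \sum_{v_i\in N_G(v)}|V(H^{v_i})\cap D|\;+\;|N_H(u)\cap\{u^*:(v,u^*)\in D\}|,
\]
together with Lemma~\ref{Lex3}, which guarantees each layer contributes $0$, $1$, or $2$ to $D$. Write $\sigma_v$ for the cross-layer sum above; the whole argument is driven by a single sub-claim in part~(1), which I would establish first and reuse throughout. The main obstacle is this sub-claim; everything afterwards is bookkeeping with the two-term decomposition.

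For part~(1), fix a pair $\{(v,u'),(v,u'')\}\in B^D$. The heart of the argument is to show $\sigma_v=0$. Suppose, aiming at a contradiction, that $(v_0,w)\in D$ for some $v_0\in N_G(v)$. Because $H$ has no isolated vertex, $w$ admits an $H$-neighbor $u$. Inspecting $(v_0,u)$: it is $G\circ H$-adjacent to all of $H^v$, contributing two vertices of $D$ via the cross-layer sum, and also to $(v_0,w)\in D$ via the intra-layer term (since $w\in N_H(u)$), so its $D$-count is at least $3$, contradicting $[1,2]$. Hence $\sigma_v=0$, and the decomposition at any $(v,u)$ collapses to $|N_H(u)\cap\{u',u''\}|$; demanding this quantity to lie in $[1,2]$ for every $u\in V(H)$ is exactly the total $[1,2]$ condition for $\{u',u''\}$ in $H$.

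For part~(2), I would write $\sigma_v=a+2b$, where $a$ and $b$ count the neighbors $v_i$ of $v$ with $|H^{v_i}\cap D|=1$ and $|H^{v_i}\cap D|=2$ respectively. When $v\notin S'$, the $[1,2]$ condition at $(v,u)$ reads $a+2b\in[1,2]$, forcing $(a,b)\in\{(1,0),(2,0),(0,1)\}$ and therefore $|N_G(v)\cap S'|=a+b\in\{1,2\}$. When $v\in S'$, the sub-claim of part~(1) rules out $b\ge 1$: any neighbor $v_1$ with $|H^{v_1}\cap D|=2$ would, by the sub-claim applied to $v_1$, force $\sigma_{v_1}=0$, contradicting $v\in N_G(v_1)\cap S'$ together with $|H^v\cap D|\ge 1$. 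If $|H^v\cap D|=2$ the sub-claim itself gives $N_G(v)\cap S'=\emptyset$; if $|H^v\cap D|=1$, picking an $H$-neighbor $u$ of the unique $u_0$ with $(v,u_0)\in D$ yields the count $a+1\le 2$ at $(v,u)$, so $|N_G(v)\cap S'|=a\le 1$. Combining the cases shows $S'$ is $1$-dependent $[1,2]$ in $G$.

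For part~(3), the case analysis above shows that $v\in S'$ with $|N_G(v)\cap S'|=0$ must satisfy $|H^v\cap D|=2$. Suppose $v'\in S'\setminus\{v\}$ with $dist_G(v,v')\le 2$. Distance $1$ is immediately excluded by $N_G(v)\cap S'=\emptyset$. In the distance-$2$ case, any intermediate vertex $v''$ lies in $N_G(v)$, hence outside $S'$, so the $D$-count at $(v'',u)$ is purely cross-layer and is bounded below by $|H^v\cap D|+|H^{v'}\cap D|\ge 2+1=3$, yielding the required contradiction.
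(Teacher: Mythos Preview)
Your argument is correct and follows the same route as the paper: in each part the contradiction is obtained by exhibiting a vertex of $G\circ H$ with at least three $D$-neighbours, using that $H$ has no isolated vertex to manufacture the extra intra-layer adjacency. Your packaging is cleaner than the paper's---where the paper runs separate ad-hoc sub-cases in each of (1), (2), (3), you isolate the single sub-claim $\sigma_v=0$ (for $v$ with $|H^v\cap D|=2$) and reuse it, and the explicit two-term decomposition $|N_{G\circ H}((v,u))\cap D|=\sigma_v+|N_H(u)\cap(D\cap H^v)|$ makes the bookkeeping transparent.

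One small gap to patch in part~(3): your appeal to ``the case analysis above'' does not by itself force $|H^v\cap D|=2$ when $|N_G(v)\cap S'|=0$. In part~(2) you only recorded the \emph{upper} bound $a\le 1$ in the case $|H^v\cap D|=1$; to exclude $a=0$ there you also need the \emph{lower} bound of the total $[1,2]$ condition at the vertex $(v,u_0)$ itself, which reads $\sigma_v+0\ge 1$ and hence $a\ge 1$. With that one line added, the implication $|N_G(v)\cap S'|=0\Rightarrow |H^v\cap D|=2$ is justified and the rest of (3) goes through as written.
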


\begin{proof}

Let $D$ be a total $[1,2]$-set of $G \circ H \in \mathcal{D}^t_{[1,2]}$; there are three cases to consider.
 \begin{itemize}
  \item [1)]
Suppose that $S=\{u^{\star},u^{\bullet}\} \in B$ is not a total $[1,2]$-set for $H$. Then two cases occur and in each case, we can establish  a contradiction with  $D$ is a total $[1,2]$-set.
  \begin{itemize}
  \item
  Let $\{u^{\star},u^{\bullet}\}\notin E(H)$ and there is a $(v',u')\in D$ such that $\{(v,u^{\star}),(v',u')\}\in E(G\circ H)$. Since $H$ dose not contain any isolated vertex, so any vertex $u''\in N_H(u')$ is dominated by $(v',u'),(v,u^{\star})$ and $(v, u^{\bullet})$.
  \item
  Let $\{u^{\star},u^{\bullet}\}$ does not dominate all vertices of $V(H)$. So, there is a vertex $(v',u')\in D$ such that $\{v,v'\} \in E(G)$ and  $(v',u')$  dominates all vertices of $H^v$. Then any vertex $u''\in N_H(u')$ is dominated by $(v',u'),(v,u^{\star})$ and $(v, u^{\bullet})$.
  \end{itemize}
\item [2)]
Suppose that  $S'=\{v: (v,u) \in D\}$  is not a $1$-dependent $[1,2]$-set for $G$. Then, three cases occur and in each case, we have a contradiction with  $D$ being a total $[1,2]$-set.
  \begin{itemize}
  \item
 There is a vertex  $v\in S'$ that is dominated by at least two vertices $v',v''\in S'$. So there are vertices $u,u',u''\in V(H)$ such that $(v,u),(v',u'),(v'',u'') \in D$. Since $H$ dose not contain any isolated vertex, there is a vertex $u'''\in V(H)$ such that $\{u,u'''\}\in E(H)$. Then, $(v,u''')$ is dominated by $(v,u),(v',u'),(v'',u'')$.
  \item
  There is a vertex  $v\in V(G)\setminus S'$ such that  $\vert N_G(v)\cap S'\vert=0$. So no vertex of $H^v$  is dominated by $D$.
  \item
  There is a vertex  $v\in V(G)\setminus S'$ such that $\vert N_G(x)\cap S'\vert > 2$. Then there are at least three vertices distinct $v',v'',v'''\in S'$ to dominate $v$. By definition of  $S'$,  there are vertices $u',u'',u'''\in V(H)$ such that $(v',u'),(v'',u''),(v''',u''')\in D$. These vertices dominate all vertices of $H^v$.
  \end{itemize}
 \item [3)]
 Let $v\in S'$ such that $\vert N(v)\cap S'\vert =0$ and there is a vertex  $v' \in S'$ such that  $dist_G(v,v')=2$.

By $\vert  N(v)\cap S'\vert  =0$, there exist  vertices $u',u'' \in V(H)$ such that $(v,u'), (v,u'') \in D$ and $ \{u',u''\} \in E(H)$. Suppose there is a vertex  $v' \in S'$ such that $dist_G(v,v')= 2$. So, there is  a vertex $v''\in V(G)$ such that $\{v,v''\},\{v',v''\} \in E(G)$. The vertices $(v,u')$, $(v,u'')$ and $(v',u')$ dominate all vertices of $H^{v''}$. It is contradictory with  $D$ being a total $[1,2]$-set. So we have $dist_G(v,v')\geq 3$.
 \end{itemize}

\end{proof}
\begin{lemma}\label{Lex2a}
Let $D$ be a total $[1,2]$-set of $G \circ H\in \mathcal{D}^t_{[1,2]}$ such that $A_1^D=\emptyset$. Then $S'=\{v: (v,u) \in D\}$ is an efficient dominating set of $G$.
\end{lemma}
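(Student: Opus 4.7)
The plan is to use $A_1^D = \emptyset$ together with Lemma \ref{Lex3} to pin down the structure of $D$ on each $H$-layer, and then read the two defining properties of an efficient dominating set directly off that structure.

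First, I would observe that since $G \circ H \in \mathcal{D}^t_{[1,2]}$ with $G$ connected and nontrivial, Lemma \ref{Lex3} forces $|V(H^v) \cap D| \le 2$ for every $v \in V(G)$. Combined with $A_1^D = \emptyset$, each layer satisfies $|V(H^v) \cap D| \in \{0, 2\}$, so $S'$ coincides with $\{v : |V(H^v) \cap D| = 2\}$ and $|D| = 2|S'|$.

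Next, I would verify that each $v \notin S'$ has exactly one neighbor in $S'$. Picking any $u \in V(H)$, the vertex $(v,u) \notin D$ needs between one and two $D$-neighbors; since $H^v \cap D = \emptyset$, these must lie in adjacent layers. By Lemma \ref{Lexneighbour}, every $v' \in N_G(v) \cap S'$ contributes its two $D$-vertices as neighbors of $(v,u)$, so the lower bound forces $|N_G(v) \cap S'| \ge 1$ and the upper bound $|N_{G\circ H}((v,u)) \cap D| \le 2$ rules out a second such $v'$, yielding $|N_G(v) \cap S'| = 1$.

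The main work is in showing $S'$ is independent in $G$, and this is where I expect the real obstacle to lie. Assuming for contradiction that $v, v' \in S'$ are adjacent, with $H^v \cap D = \{(v,u_1),(v,u_2)\}$ and $H^{v'} \cap D = \{(v',u_3),(v',u_4)\}$, the vertex $(v,u_1) \in D$ already has $(v',u_3)$ and $(v',u_4)$ as $D$-neighbors, saturating its budget of two. Applying the same analysis to the other three members of $D$ in $H^v \cup H^{v'}$, and tracking the constraints on $(v,u)$ for $u \in V(H) \setminus \{u_1, u_2\}$ (and symmetrically at $v'$), forces a configuration incompatible with $D$ being a total $[1,2]$-set across the whole $G \circ H$. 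The delicate point is converting the saturation of individual budgets into a global contradiction: locally one only extracts $\{u_1, u_2\} \notin E(H)$ and the absence of any further $S'$-neighbor of $v$, so the argument must appeal to the totality requirement over several layers together with the connectedness of $G$ to close the loop.

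Combining these three steps gives $|N_G(v) \cap S'| = 1$ for $v \notin S'$ and $|N_G(v) \cap S'| = 0$ for $v \in S'$, which is precisely the definition of an efficient dominating set for $G$.
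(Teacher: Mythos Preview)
Your steps 1 and 2 are fine, but step 3 is a genuine gap, not just a delicate detail. Assuming $v,v'\in S'$ are adjacent, the local count only forces $\{u_1,u_2\}\notin E(H)$, that $u_1,u_2$ are in fact isolated in $H$, and that $v,v'$ have no further $S'$-neighbours; none of this produces a contradiction. Concretely, take $G=K_2$ and $H=2K_1$: then $G\circ H\cong C_4$, the set $D=V(G\circ H)$ is a total $[1,2]$-set with $A_1^D=\emptyset$, yet $S'=V(K_2)$ is not independent. So the ``global'' argument you hope to assemble from connectedness and totality does not exist in the generality you are working in.

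The paper closes this gap by going through Lemma~\ref{Lex1and2}. For $v\in S'$ with $H^v\cap D=\{(v,u'),(v,u'')\}$ it first argues (as in part 1 of that lemma, using that $H$ has no isolated vertex) that $\{u',u''\}$ is a total $[1,2]$-set of $H$; in particular $u'$ and $u''$ are adjacent in $H$, so $(v,u')$ already has $(v,u'')$ as a $D$-neighbour inside its own layer. Now an adjacent $v'\in S'$ would supply two more $D$-neighbours from $H^{v'}$, giving three and an immediate contradiction; the distance-$2$ case is handled the same way. The missing idea in your outline is precisely this preliminary step---showing the two $D$-vertices in each $S'$-layer are adjacent in $H$---and it is exactly where the (implicit) hypothesis that $H$ has no isolated vertex enters. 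Your direct counting route cannot bypass it.
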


\begin{proof}
Since $D$ be a total $[1,2]$-set of $G \circ H$, then there is a vertex $v\in S'$ such that the set $D$ contains $(v,u'),(v,u'')$ for some vertex $u',u''\in V(H)$. By  Lemma \ref{Lex1and2}, $\{u',u''\}$ is a total $[1,2]$-set for $H$. So for any vertex $v'\in N_G(v)$, none of vertices in $H^{v'}$ cannot be contained in $D$. Thus  $dist_G(v,v')\geq 3$ and $S$ is an efficient dominating set of $G$.

\end{proof}
% % %aaa
In the sequel  $\mathcal{SD}_{[i,j]}^{k}(G)$ is  used to denote the  set of all $k$-dependent $[i,j]$-set $S$ of $G$ such that $S$ satisfies in the following condition
$$(\forall v\in S \,\,\,\, |N(v)\cap S|=0)\rightarrow (\forall v'\in S\setminus\{v\}\,\,\,\,  d(v,v')\geq3).$$
%%
%%
%%for every disjoint vertices $v,v'\in S$, $dist(v,v') \geq 3$ occurs as a consequences of $\vert N(v) \cap S \vert=0$. Moreover $\mathcal{D}_{[j,k]}^{k'd}$ is assumed   class of all graphs $G$ such that $\mathcal{SD}_{[i,j]}^{k}(G) \neq \emptyset$.
\begin{corollary}\label{Lex2}
Let $G $ be a connected nontrivial graph and $D$ be a total $[1,2]$-set of $G \circ H\in \mathcal{D}^t_{[1,2]}$, one of the following cases holds:
\begin{itemize}
\item
If $A_1^D=\{(u,v):\vert V(H^v)\cap D\vert=1\}=\emptyset$, then there is a total $[1,2]$-set $S=\{ u^{\star},u^{\bullet}\}$  in $H$ and an efficient dominating set $S'$ in $G$ such that $D'=S' \times S$ is a total $[1,2]$-set for $G \circ H$ and $\vert D\vert = \vert D' \vert = 2\vert S'\vert$.
\item
If $A_2^D=\{(u,v):\vert V(H^v)\cap D\vert=2\}=\emptyset$ and $H$ contains an isolated vertex $v$. Then there is a total $[1,2]$-set $S$ in $G$ where $D'=S \times \{v\}$ and $D'$  is a total $[1,2]$-set for $G \circ H$.  Moreover, we have $\vert D\vert = \vert D' \vert = \vert S\vert$.
\item
If $A_2^D=\{(u,v):\vert V(H^v)\cap D\vert=2\}=\emptyset$ and $H$ does not contain any isolated vertex, then for every vertex $v\in V(H)$  there is a $1$-dependent total $[1,2]$-set $S$ in $G$ such that $D'=S \times \{v\}$ and $D'$  is a total $[1,2]$-set for $G \circ H$.  Clearly, $\vert D\vert = \vert D' \vert = \vert S\vert$.
\item
If $A_1^D \neq \emptyset$ and $A_2^D \neq \emptyset$, then there is a total $[1,2]$-set $S=\{ u^{\star},u^{\bullet}\}$  in $H$ and  a $1$-dependent total $[1,2]$-set $S'$ in $G$ such that  for any vertex $v\in S$ and $u\in X$ where $X=\{x:\vert N_G(x) \cap S' \vert =0\}$, $dist(v,u)\geq 3$. Moreover $D'=((X \times  S)\cup (S'\setminus X)\times \{u^{\star}\})$ is a total $[1,2]$-set of size $\vert D\vert$  in $G \circ H$ and $\vert D\vert = \vert D' \vert = \vert S'\vert+ \vert X\vert$.
\end{itemize}
\end{corollary}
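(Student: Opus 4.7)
The plan is to dispose of the four listed sub-cases separately, in each case extracting from $D$ its canonical projections $S'=\{v:(v,u)\in D\}\subseteq V(G)$ and, when applicable, a pair $\{u^{\star},u^{\bullet}\}\in B^{D}$, invoking the appropriate structural lemma to decode what is already forced on these projections, and finally re-assembling a product-shaped $D'$ of matching cardinality. Throughout, Lemma~\ref{Lexneighbour} is the bookkeeping device that lets in-layer and cross-layer contributions to $D'$ be counted independently.

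For Cases~1, 2 and 3 the argument is essentially direct bookkeeping on top of earlier lemmas. In Case~1, Lemma~\ref{Lex2a} already certifies $S'$ as an efficient dominating set of $G$ and Lemma~\ref{Lex1and2}(1) upgrades any pair in $B^{D}$ to a size-$2$ total $[1,2]$-set $\{u^{\star},u^{\bullet}\}$ of $H$, so $D'=S'\times\{u^{\star},u^{\bullet}\}$ satisfies $|D'|=2|S'|=|D|$; efficiency forces every $(w,u)\notin D'$ to have a unique $S'$-neighbour, hence a $D'$-neighbour count equal either to $2$ (via that neighbour's layer) or to $|N_{H}(u)\cap\{u^{\star},u^{\bullet}\}|\in\{1,2\}$ in the in-layer case. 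In Case~2, Lemma~\ref{Lex1} produces a total $[1,2]$-set $S\subseteq V(G)$ with $|S|=|D|$ (the map $(u,v')\mapsto u$ is injective because $A_{2}^{D}=\emptyset$); taking $D'=S\times\{v\}$ for the prescribed isolated $v\in V(H)$ annihilates all in-layer contributions, leaving $|N_{G}(w)\cap S|\in\{1,2\}$ as the only count. Case~3 reuses this construction, but now the absence of isolated vertices in $H$ upgrades $S$ to a $1$-dependent total $[1,2]$-set, for the second conclusion of Lemma~\ref{Lex1} would force $H$ to possess an isolated vertex if any $u\in S$ had $|N_{G}(u)\cap S|=2$.

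Case~4, where $A_{1}^{D}\neq\emptyset$ and $A_{2}^{D}\neq\emptyset$, contains the only real obstacle. Lemma~\ref{Lex1and2} hands over $S'\in\mathcal{SD}^{1}_{[1,2]}(G)$ together with the distance-$3$ property on its ``$S'$-isolated'' vertices, and makes any $\{u^{\star},u^{\bullet}\}\in B^{D}$ into a total $[1,2]$-set of $H$. Writing $L_{i}=\{v\in V(G):|V(H^{v})\cap D|=i\}$ for $i\in\{1,2\}$, one has $|D|=|L_{1}|+2|L_{2}|$ and $S'=L_{1}\cup L_{2}$, so the announced identity $|D'|=|S'|+|X|$ reduces to the set equality $X=L_{2}$, which is the pivotal internal claim. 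The inclusion $X\subseteq L_{2}$ is immediate, since a vertex $v\in L_{1}$ has $(v,u^{\ast}_{v})\in D$ and therefore needs a strictly positive cross-layer count, i.e.\ an $S'$-neighbour in $G$. For $L_{2}\subseteq X$, I plan to suppose $v\in L_{2}$ has some $v^{\ast}\in N_{G}(v)\cap S'$ and use Lemma~\ref{Lexneighbour} together with the two $D$-vertices $(v,u_{1}),(v,u_{2})$ sitting in $H^{v}$ to already contribute $2$ to the $D$-neighbour count of any $D$-vertex of $H^{v^{\ast}}$; pushing this through forces the unique $D$-vertex of $H^{v^{\ast}}$ to be supported on a vertex of $H$ with no $H$-neighbours, contradicting the standing absence of isolated vertices. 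With $X=L_{2}$ in hand, $|D'|=|D|$ is automatic, and verifying the total $[1,2]$-property of $D'=(X\times\{u^{\star},u^{\bullet}\})\cup((S'\setminus X)\times\{u^{\star}\})$ is a finite casework on the position of a generic $(v,u)\in V(G\circ H)$: Lemma~\ref{Lexneighbour} replaces the cross-layer count by the fixed expression $|N_{G}(v)\cap L_{1}|+2|N_{G}(v)\cap L_{2}|$ inherited from $D$, the distance-$3$ conclusion of Lemma~\ref{Lex1and2}(3) prevents any doubled $X$-layer from interfering with an $(S'\setminus X)$-contribution, and the total $[1,2]$-character of $\{u^{\star},u^{\bullet}\}$ in $H$ handles the in-layer bounds.
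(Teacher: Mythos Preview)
Your proposal is correct and follows the same route as the paper: the paper's proof is the single sentence ``This corollary is a direct result of Lemma~\ref{Lexneighbour}, \ref{Lex1}, \ref{Lex1and2} and \ref{Lex2a}'', and your four-case argument is precisely a fleshing-out of how those lemmas combine. Your Case~4 computation of $X=L_{2}$ (and the attendant use of the no-isolated-vertex hypothesis on $H$) goes beyond what the paper spells out, but it is exactly the bookkeeping implicit in invoking Lemma~\ref{Lex1and2}, so the approaches coincide.
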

\begin{proof}
This corollary is a direct result of Lemma \ref{Lexneighbour}, \ref{Lex1}, \ref{Lex1and2} and \ref{Lex2a}.

\end{proof}

\begin{theorem} \label{lexthem}
	Let $G$ and $H$ be two graphs. Then, $G \circ H \in \mathcal{D}^t_{[1,2]}$   if and only if one of the following conditions holds:
\begin{enumerate}
\item
$G=K_1$ and $H \in \mathcal{D}^t_{[1,2]}$;
  \item
  $G$ has a total $[1,2]$-set $S$ such that if $S$ has a vertex $v$ where $\vert N(v) \cap S\vert =2$ then $H$ has an isolated vertex;
  \item
  $G$ is an efficient domination graph and $\gamma_{t[1,2]}(H)=2$;
  \item
  $\mathcal{SD}^1_{[1,2]}(G)\neq \emptyset$ and $\gamma_{t[1,2]}(H)=2$.
\end{enumerate}

\end{theorem}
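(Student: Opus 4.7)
The plan is to prove the biconditional in two directions, with the forward direction essentially assembling the structural lemmas already established and the backward direction producing an explicit total $[1,2]$-set of $G \circ H$ in each of the four cases.

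For the forward direction, suppose $G \circ H \in \mathcal{D}^t_{[1,2]}$ and let $D$ be any total $[1,2]$-set. Lemma \ref{Lex3} disposes of the case where some $H$-layer contains more than two vertices of $D$: then $G = K_1$ and $H \in \mathcal{D}^t_{[1,2]}$, giving condition (1). Otherwise every layer contributes at most two vertices to $D$, and we split on the three-way classification $A_1^D = \emptyset$, $A_2^D = \emptyset$, or both nonempty. If $A_2^D = \emptyset$, Lemma \ref{Lex1} directly yields condition (2). If $A_1^D = \emptyset$, Lemma \ref{Lex2a} shows that the $G$-projection of $D$ is an efficient dominating set, and Lemma \ref{Lex1and2}(1) applied to any doubly hit layer shows that $H$ has a size-$2$ total $[1,2]$-set, hence $\gamma_{t[1,2]}(H) = 2$ (the lower bound $\geq 2$ being trivial since no vertex can dominate itself). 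This is condition (3). When both $A_1^D$ and $A_2^D$ are nonempty, parts (2) and (3) of Lemma \ref{Lex1and2} place the $G$-projection in $\mathcal{SD}^1_{[1,2]}(G)$ while part (1) again forces $\gamma_{t[1,2]}(H) = 2$, yielding condition (4).

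For the backward direction, we exhibit a total $[1,2]$-set for $G \circ H$ in each case. In case (1), $\{g\} \times S_H$ works for any total $[1,2]$-set $S_H$ of $H$. In case (2), given the total $[1,2]$-set $S$ of $G$, we take $D' = S \times \{u_0\}$, where $u_0$ is an isolated vertex of $H$ if some $v \in S$ has $|N(v) \cap S| = 2$, and any vertex of $H$ otherwise; Lemma \ref{Lexneighbour} makes the neighborhood count routine. In case (3), for an efficient dominating set $S'$ of $G$ and a size-$2$ total $[1,2]$-set $\{u^{\star}, u^{\bullet}\}$ of $H$, the product $D' = S' \times \{u^{\star}, u^{\bullet}\}$ works: a vertex $(v,u)$ with $v \notin S'$ receives exactly two neighbors through its unique $G$-dominator, while one with $v \in S'$ inherits one or two neighbors from within its own $H$-layer. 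In case (4), with $S' \in \mathcal{SD}^1_{[1,2]}(G)$, $\{u^{\star}, u^{\bullet}\}$ a size-$2$ total $[1,2]$-set of $H$, and $X = \{v \in S' : |N_G(v) \cap S'| = 0\}$, we use the mixed construction $D' = (X \times \{u^{\star}, u^{\bullet}\}) \cup ((S' \setminus X) \times \{u^{\star}\})$ already suggested by Corollary \ref{Lex2}.

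The most delicate verification is case (4): one must ensure that no vertex of $G \circ H$ accumulates three or more neighbors in $D'$. This could only occur for a vertex $(v, u)$ with $v \notin S'$ having $G$-neighbors in both $X$ and $S' \setminus X$, since each $X$-layer contributes two elements to $D'$ and each $(S' \setminus X)$-layer contributes one. The defining condition of $\mathcal{SD}^1_{[1,2]}(G)$ is tailored to rule this out: every $x \in X$ sits at $G$-distance at least three from every other vertex of $S'$, so $x$ cannot share a $G$-neighbor with any $y \in S' \setminus X$, nor can two distinct elements of $X$ share a common neighbor. Once this observation is in place, the remaining checks---existence of at least one neighbor, and the behavior at vertices $(v, u)$ with $v \in S'$---reduce to routine counting via Lemma \ref{Lexneighbour}, completing the proof.
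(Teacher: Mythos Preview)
Your proposal is correct and follows essentially the same route as the paper's proof: the forward direction is the same case split on how many vertices of $D$ land in each $H$-layer, invoking Lemmas~\ref{Lex3}, \ref{Lex1}, \ref{Lex1and2}, and \ref{Lex2a} exactly as the paper does (the paper packages the last three through Corollary~\ref{Lex2} after first normalising $D$ onto two fixed $G$-layers, whereas you work directly with $A_1^D$ and $A_2^D$, but this is only cosmetic); the backward direction uses the same four explicit constructions, and your verification in case~(4) via the distance-$3$ condition is precisely the argument the paper spells out.
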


\begin{proof}
	Suppose that $D$ be a total $[1,2]$-set of  $G \circ H \in \mathcal{D}^t_{[1,2]}$. If $D$ contains more than two vertices of a $H_{-}$Layer, then by Lemma \ref{Lex3}, $G=K_1$ and $H \in \mathcal{D}^t_{[1,2]}$.
	\\
	 If $D$ contains at most two vertices of each $H_{-}$Layer, then  there is a total $[1,2]$-set $D'$ for $G \circ H $ such that $\vert D' \vert=\vert D \vert$ and  vertices of $D'$ have  been choosen from two $G_{-}$Layers as $G^{u^{\star}}$ and $G^{u^{\bullet}}$. Without lose of generality we consider that $S=\{v: (v,u)\in D'\}$ and $S'=\{ u^{\star},u^{\bullet}\}$. Then, the set  $D'$ satisfies one of the following conditions:
\begin{enumerate}
\item [a)]
By Lemma \ref{Lex1}, $D=\{(v,u^{\star}):v\in S\}$, so $S$ is a total $[1,2]$-set for $G$ and if there exists a vertex $v\in D$ such that $\vert N(v) \cap S\vert =2$, then $H$ has an isolated vertex.
  \item [b)]
  $D'=\{(v,u^{\star}):v\in S\;\text{and}\;u\in S'\}$, by Corollary \ref{Lex2}, $S$ is an efficient dominating set of $G$ and $S'$ is  a total $[1,2]$-set for $H$.
  \item [c)]
  There is a vertex $w\in S$ such that $(w,u^{\star})\in D'$ but $(w,u^{\bullet})\notin D'$. By Lemma \ref{Lex1and2}, we have $S\in \mathcal{SD}^1_{[1,2]}(G)$ and $S'$ is  a total $[1,2]$-set for $H$.
\end{enumerate}

Now, we show the other side as follows:

\begin{enumerate}
\item
If $G=K_1$ and $H$ has a total $[1,2]$-set $S'$, then it is easy to see that $G \circ H=H$ and $S'$ is a  total $[1,2]$-set of $G \circ H$.
  \item
  Assume  that $S$ is a total $[1,2]$-set of $G$ and $u^{\star}\in V(H)$. We define $D$ as $S\times \{u^{\star}\}$. Since every vertex of $G^ {u^{\star}}$  is dominated by at least one of vertices of $D$, then every vertex of other $G_{-}$Layers is dominated by $D$. So, for any vertex $(v',u')\in G \circ H$, we have $\vert N((v',u')) \cap D\vert  \geq 1$. Now, it is sufficient to show that $\vert N((v',u')) \cap D\vert  \leq 2$. To this end, we  consider two cases:
  \begin{enumerate}
\item [a)]
For every vertex $v\in S$, $\vert N(v) \cap S\vert =1$: So, it is clear that for any vertex  $(v',u^{\star})$ of $G^ {u^{\star}}$, $\vert N((v',u^{\star})) \cap D\vert  \leq 2$. If $u'\neq u^{\star}$, we need to show that $\vert N((v',u')) \cap D\vert  \leq 2$. Then following cases can happen:
   \begin{itemize}
\item [a1)]
  $(v',u^{\star})\in D$ and $\{u',u^{\star}\}\in E(H)$; for every $v''\in S$  adjacent to  $v'$, $(v',u')$  is dominated by  $(v',u^{\star})$ and $(v'',u^{\star})$. Since $(v',  u^{\star})\in D$ and $v' \in S$, so $\vert N(v') \cap S\vert =2$ and $\vert N((v',u')) \cap D\vert =\vert N(v') \cap S\vert +1=2$.
\\
\item [a2)]
$(v',u^{\star})\in D$ and $\{u',u^{\star}\}\notin E(H)$; if $v''\in S$ and $\{v',v''\}\in E(G)$  then $(v',u')$ is dominated by  $(v'',u^{\star})$. So $\vert N((v',u')) \cap D\vert =\vert N(v') \cap S\vert =1$.
\\
 \item [a3)]
$(v',u^{\star})\notin D$; for every $v''\in S$ and $\{v',v''\}\in E(G)$, $(v',u')$ is dominated by  $(v'',u^{\star})$. Since $(v',u^{\star})\notin D$, $v'\notin S$. We have$\vert N((v',u')) \cap D\vert =\vert N(v') \cap S\vert \leq 2$.
\\
%We show for any vertex $(v',u') \in V(G \circ H)$ , $1 \leq\vert N((v',u')) \cap D\vert \leq 2 $, so $D$ is a total $[1,2]$-set for $G \circ H$.
\end{itemize}

  \item [b)]
  There is a vertex $v\in S$ such that $\vert N(v) \cap S\vert =2$ and $u^{\star}$ is an isolated vertex in $H$. For every vertex $v''\in S$ and $\{v',v''\}\in E(G)$,  $(v',u')$  is dominated by $(v'',u^{\star})$. So it is the case that $\vert N((v',u')) \cap D\vert =\vert N(v') \cap S\vert \leq 2$.
  \end{enumerate}

  \item
 Let $S$ be an efficient dominating set of $G$, $S'=\{ u^{\star},u^{\bullet}\}$  is a total $[1,2]$-set for $H$ and $D=\{(v,u): v\in S\;and\;u\in S'\}$. It is easy to see that $D$ is a total dominating set of $G \circ H$.
 \\
  If $v'\in S$, then every  $(v',u') \in V(H^{v'})$ are dominated by  either $(v', u^{\star})$ or $(v', u^{\bullet})$. Since $S$ is an efficient dominating set of $G$, then $N_G(v')\cap S= \emptyset$ and $(v',u')$ is not dominated by any other vertices.
   If $v'\notin S$, then there is exactly one vertex $v''\in S$  such that $\{v',v''\}\in E(G)$ and every  $(v',u') \in V(H^{v'})$ are dominated by  either $(v'', u^{\star})$ and $(v'', u^{\bullet})$. So, $D$ is a total $[1,2]$-set for $G \circ H$.
  \item
Suppose that $S\in \mathcal{SD}^1_{[1,2]}$, $S'=\{ u^{\star},u^{\bullet}\}$ is a total $[1,2]$-set for $H$ and
$$D=\{(v,u^{\star}),(v,u^{\bullet}):v\in S\;\text{and} \;\vert  N(v)\cap S\vert  =0 \} \cup \{(v,u^{\star}):v\in S\;\text{and} \;\vert  N(v)\cap S\vert  =1 \} .$$
\\
 By definition of $D$, It is easy to see that for any vertex $(v,u) \in D$, there is a vertex $(v',u')\in D$ such that $\{(v,u),(v',u')\} \in E(G\circ H)$. So, $D$ is a total set of $G \circ H$. Now, we must show that $D$ dominates all vertices of  $G \circ H$ at least one and at most two times. It is clear $S=\{ v: (v,u^{\star})\in D\}\in \mathcal{SD}^1_{[1,2]}$. We consider three kind of vertices and we will show vertices of each $H_{-}$Layer are dominated by at least one and two vertices of $D$.
\begin{enumerate}
\item [a)] $v \in S$ and $\vert  N(v)\cap S\vert  =0$:
Since $S'=\{ u^{\star},u^{\bullet}\}$ is a total $[1,2]$-set for $G \circ H$, $(v,u^{\star})\in D$ and $(v,u^{\bullet})\in D$. Then, all of the vertices of  $H^v$ are dominated by   $(v,u^{\star})$ and $(v,u^{\bullet})$. Since $\vert  N(v)\cap S\vert  =0$. So, any other vertex cannot dominate vertices of $H^v$. Therefore $1\leq\vert  N(v,u)\cap D\vert  \leq 2$.
\\
\item [b)] $v \in S$ and $\vert  N(v)\cap S\vert  =1$:
 So, there is a vertex $v' \in S$ such that $\{v,v'\} \in E(G)$, $(v', u^{\star})$  dominates all of the vertices of $H^v$ and these vertices can also be dominated by  $(v, u^{\star})$. Since $S$ is a $1$-dependent $[1,2]$-set for $G$, then there is not any other vertex in neighborhood of $v$ in $S$, so $1\leq\vert  N(v,u)\cap D\vert  \leq 2$.
\\
\item [c)] $v \notin S$: Since $S$ is a $1$-dependent $[1,2]$-set for $G$, it is easy to see that  there is a vertex $v' \in S$ such that $\{v,v' \}\in E(G)$. So, all of the vertices of $H^v$ are dominated by $(v',u^{\star})$. If $\vert  N(v')\cap S\vert  =0$, then $(v',u^{\bullet})$ dominates vertices of $H^v$ and any other vertices can not dominate them. If there exist a $v'' \in S$ such that $\{v,v''\} \in E(G)$ and it is contradict to $dist_G(v',v'') \geq 3$. If $\vert  N(v')\cap S\vert  =0$, there maybe exists a vertex $(v'',u^{\star}) \in D$ such that  $\vert  N(v')\cap S\vert  \neq 0$ and there is no vertex in $H^{v''}$ and other $H_{-}$Layers dominate vertices of $H^v$.
\end{enumerate}

\end{enumerate}

\end{proof}
%%By above results, let $G$ has a total $[1,2]$-set, $G \circ H$  has a total $[1,2]$-set if and only if $G$ has a $1$-dependent $[1,2]$-set or $H$ has an isolated vertices.
%%\\
\subsection{Total $[1,k]$-set of Lexicographic Product of Graphs}
In this section, we express necessary and sufficient conditions for the given graphs $G$ and $H$ such that $G \circ H$ has a total  $[1,k]$-set. The Lemma \ref{Lex3}, \ref{Lex1}, \ref{Lex1and2} and Corollary \ref{Lex2} are generalized to total $[1,k]$-set. Since proofs in this section can be similarly obtained from the case on total $[1,2]$-sets, we omit them.

\begin{theorem}\label{Lexk}
Let $D$ be a total $[1,k]$-set for $G \circ H$.
\begin{enumerate}
  \item [a)]
  If $D$ contains more than $k$ vertices of a $H_{-}$Layer, then $G=K_1$ and $H \in \mathcal{D}^t_{[1,k]}$.

 \item [b)]
 If $D$ contains at most one vertex of every $H_{-}$Layers, then $S=\{v\in V(G):(v,u)\in D\}$ is a $(k-1)$-dependent total $[1,k]$-set of $G$. Moreover if there is a vertex $v \in S$ such that $\vert N(v) \cap S\vert =k$, then $H$ contains an isolated vertex.

 \item [c)]
If $H$ does not contain any isolated vertex and  $S=\{v\in V(G):(v,u)\in D\}$ is not a total set of $G$, then $D$ contains at most $k$ vertices of each $H^v$ and  satisfies the following conditions:

\begin{enumerate}
  \item [c1)]
  The set $S'=\{u\in V(H):(v,u)\in D\}$ is a  total $[1,k]$-set  of $H$ with cardinality to at most $k$  and there is a vertex $x\in S$  such that $1<|D\cap V(H^x)|\leq |S'|$;

\item [c2)]
$S$ is a $(k-1)$-dependent $[1,k]$-set for $G$;
\item [c3)]
 If there exist a vertex  $v\in S$ such that $\vert  N(v)\cap S\vert  =0$, then  $1<|D\cap V(H^v)|\leq \lfloor k/ 2 \rfloor$ or  for any vertex $v' \in S-\{v\}$, we have $dist_G(v,v')\geq 3$.
 \end{enumerate}
 \end{enumerate}

 \end{theorem}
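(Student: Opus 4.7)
The overall approach is to replicate the arguments of Lemmas \ref{Lex3}, \ref{Lex1}, \ref{Lex1and2}, and \ref{Lex2a} with the threshold $2$ systematically replaced by $k$ and with careful counting of how many vertices of $D$ sit inside each $H$-layer. The identity of Lemma \ref{Lexneighbour} does not depend on $k$, so I will reuse it unchanged whenever I need adjacent layers to mutually cover one another.

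For part (a) I would mirror Lemma \ref{Lex3} exactly: if $|D\cap V(H^v)|>k$ and $v$ has any neighbor $v'\in V(G)$, every vertex of $H^{v'}$ already receives more than $k$ dominators from $H^v$ alone, contradicting the upper bound. Connectedness of $G$ forces $G=K_1$, and the trace of $D$ on $H^v$ becomes a total $[1,k]$-set of $H$. For part (b), writing $u_v$ for the unique vertex with $(v,u_v)\in D$ whenever $v$ projects into $S$, and rerunning the three cases of Lemma \ref{Lex1}, yields that $S$ is a total $[1,k]$-set of $G$. The new ingredient is the $(k-1)$-dependence: if $u_v$ has any neighbor $u'$ in $H$, then $(v,u')$ is dominated by $(v,u_v)$ together with every representative of $N_G(v)\cap S$, forcing $|N_G(v)\cap S|+1\le k$; the only escape, and hence the \emph{moreover} clause, is to choose $u_v$ isolated in $H$.

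For part (c), the joint hypothesis that $H$ has no isolated vertex and that the projection $S$ fails to be a total set of $G$ forces some layer to host more than one vertex of $D$. For (c1) I pick $x\in V(G)$ with $|D\cap V(H^x)|>1$ and let $S'=\{u:(x,u)\in D\}$; mimicking the proof of Lemma \ref{Lex1and2}(1), the set $S'$ must totally $[1,k]$-dominate $H$ entirely inside $H^x$, because any external dominator coming from an adjacent layer $H^{v''}$ with $v''\in N_G(x)\cap S$ would push the count on some $(x,u')$ over $k$; combined with part (a) this gives $|S'|\le k$. For (c2) the $(k-1)$-dependence of $S$ in $G$ follows case by case as in Lemma \ref{Lex1and2}(2), with the counter $|N_G(w)\cap S|$ replaced by the full aggregate $\sum_{v\in N_G(w)\cap S}|D\cap V(H^v)|$, so that the cap $k$ still controls the size of $N_G(w)\cap S$. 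For (c3) the distance-$2$ obstruction of Lemma \ref{Lex1and2}(3) becomes quantitative: if $v\in S$ has no $S$-neighbor and some $v'\in S$ lies at distance $2$ through $v''\notin S$, then every vertex of $H^{v''}$ is dominated by the union $(D\cap V(H^v))\cup(D\cap V(H^{v'}))$, forcing $|D\cap V(H^v)|+|D\cap V(H^{v'})|\le k$ and hence the bound $\lfloor k/2\rfloor$ once the two layer sizes are chosen symmetrically.

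The main obstacle is part (c3): one must argue that the combined bound $k$ across the distance-$2$ gap splits tightly so as to yield the floor, and rule out intermediate configurations in which only one of the two layers saturates. Everywhere else the adaptation from $k=2$ to general $k$ is routine bookkeeping in the spirit of the proofs of Lemmas \ref{Lex3}--\ref{Lex2a}, which is presumably why the authors elected to omit the details.
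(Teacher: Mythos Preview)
Your approach is exactly the one the paper takes: it explicitly states that the proofs ``can be similarly obtained from the case on total $[1,2]$-sets'' and omits them entirely, so your plan to rerun Lemmas~\ref{Lex3}, \ref{Lex1}, \ref{Lex1and2} and \ref{Lex2a} with the threshold $2$ replaced by $k$ is precisely the intended route. You have in fact supplied more detail than the paper itself, including an honest flag on the splitting issue in~(c3); the paper offers no further guidance there.
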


\begin{theorem}
	Let $G$ and $H$ be two graphs. $G\circ H \in \mathcal{D}^t_{[1,k]}$ if and only if $G$ and $H$ satisfy one of the following conditions

\begin{enumerate}
\item
$G=K_1$ and $H \in \mathcal{D}^t_{[1,k]}$;
  \item
  $G$ has a total $[1,k]$-set $S$ and if $S$ has a vertex $v$ such that $\vert N(v) \cap S\vert =k$ then $H$ has an isolated vertex;
  \item
  $G$ is an efficient domination graph  and $\gamma_{t[1,k]}(H)\leq k$;
  \item
   $G$ has a $(k-1)$-dependent $[1,k]$-set $S$ and if $S \in \mathcal{SD}^{k-1}_{[1,k]}(G)$ then $\gamma_{t[1,k]}(H)\leq k$ and otherwise $\gamma_{t[1,k]}(H)\leq k/2.$

\end{enumerate}
\end{theorem}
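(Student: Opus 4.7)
My plan is to mirror the proof of Theorem~\ref{lexthem} and delegate the structural work to Theorem~\ref{Lexk}, which already generalizes Lemmas~\ref{Lex3}, \ref{Lex1}, \ref{Lex1and2} and Corollary~\ref{Lex2} to arbitrary $k\ge 2$. The two implications are treated separately: necessity by classifying a putative total $[1,k]$-set $D$ according to how many of its vertices lie in each $H$-layer, and sufficiency by exhibiting an explicit total $[1,k]$-set for each of the four conditions.

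For necessity, let $D$ be a total $[1,k]$-set of $G\circ H$ and set $m=\max_{v\in V(G)} |D\cap V(H^v)|$. The trichotomy $m>k$, $m\le 1$, $1<m\le k$ matches the three parts of Theorem~\ref{Lexk}. If $m>k$, part (a) gives condition~1. If $m\le 1$, part (b) yields a $(k-1)$-dependent total $[1,k]$-set $S=\{v:(v,u)\in D\}$ of $G$, and if a vertex of $S$ attains the bound $k$ then $H$ has an isolated vertex -- this is exactly condition~2. Otherwise $1<m\le k$, so by part (c) the set $S'=\{u:(v,u)\in D\}$ is a total $[1,k]$-set of $H$ of size at most $k$ and $S$ is a $(k-1)$-dependent $[1,k]$-set of $G$. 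Condition (c3) then forces the dichotomy in conditions~3 and~4: if every $v\in S$ with $N_G(v)\cap S=\emptyset$ has $dist_G(v,S\setminus\{v\})\ge 3$ then $S\in\mathcal{SD}^{k-1}_{[1,k]}(G)$ and condition~4 holds with $\gamma_{t[1,k]}(H)\le k$; otherwise the alternative bound $|D\cap V(H^v)|\le \lfloor k/2\rfloor$ from (c3) is in force, which combined with the multi-vertex-per-layer content forces $S$ to be an efficient dominating set with $\gamma_{t[1,k]}(H)\le k/2$, giving condition~3 or the second subcase of condition~4.

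For sufficiency, I would handle each condition by constructing a witness $D$ and verifying $1\le |N_{G\circ H}((v',u'))\cap D|\le k$ at every vertex, using Lemma~\ref{Lexneighbour} to replace layer-dependent counts by layer-independent ones. In condition~1 the graph $G\circ H$ equals $H$ so there is nothing to check. In condition~2, take $D=S\times\{u^\star\}$ where $u^\star$ is an isolated vertex of $H$ whenever some $v\in S$ attains $|N_G(v)\cap S|=k$; every vertex $(v',u')$ sees exactly the neighbours $(v'',u^\star)$ with $v''\in N_G(v')\cap S$ plus at most one further contribution inside $H^{v'}$, giving the bounds. In condition~3 take $D=S\times T$ where $S$ is an efficient dominating set of $G$ and $T$ is a total $[1,k]$-set of $H$ with $|T|\le k$, so each vertex outside $S$ has exactly one $S$-neighbour and its whole layer is dominated exactly $|T|$ times. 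In condition~4 set
\[
D=\bigl\{(v,u):v\in S,\,|N_G(v)\cap S|=0,\,u\in T\bigr\}\cup\bigl\{(v,u^\star):v\in S,\,|N_G(v)\cap S|\ge 1\bigr\},
\]
with $T$ a total $[1,k]$-set of $H$ of size $\le k$ when $S\in\mathcal{SD}^{k-1}_{[1,k]}(G)$ and of size $\le k/2$ otherwise, and $u^\star\in T$ fixed.

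The main obstacle will be the upper-bound verification in condition~4. A vertex $(v',u')$ with $v'\notin S$ can inherit dominators from several neighbouring layers in $G$, and each such layer contributes either a single $(v'',u^\star)$ or a full copy of $T$. The hypothesis $S\in\mathcal{SD}^{k-1}_{[1,k]}(G)$ (respectively the bound $|T|\le k/2$) is precisely what keeps $|N_{G\circ H}((v',u'))\cap D|\le k$: the distance-three separation prevents an isolated-in-$G[S]$ vertex from sharing a neighbour with any other $S$-vertex, while $|T|\le k/2$ allows at most two heavy layers to contribute without overshooting. Careful bookkeeping of these contributions -- separating neighbours $v''\in N_G(v')\cap S$ by whether $|N_G(v'')\cap S|=0$ or $\ge 1$ -- is the one place where the argument requires more than a direct transcription of the $k=2$ case, since the intermediate degrees $2,\ldots,k-1$ now also appear.
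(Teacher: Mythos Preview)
Your overall strategy matches the paper's: the paper omits the proof entirely, remarking only that it is obtained from the $k=2$ case, and your plan is precisely to mirror Theorem~\ref{lexthem} via the generalized structural lemma Theorem~\ref{Lexk}. So at the level of approach you are doing exactly what the authors intend.

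That said, your necessity argument in the regime $1<m\le k$ has a genuine gap. Part~(c) of Theorem~\ref{Lexk} is stated under the hypotheses that $H$ has no isolated vertex \emph{and} that $S=\{v:(v,u)\in D\}$ is not a total set of $G$; you invoke it without checking either. If $H$ has an isolated vertex, or if $S$ happens to be total, you must still land in one of conditions~2--4, and this requires a separate (easy) argument you have not supplied. More seriously, your claim that when $S\notin\mathcal{SD}^{k-1}_{[1,k]}(G)$ the bound $|D\cap V(H^v)|\le\lfloor k/2\rfloor$ ``forces $S$ to be an efficient dominating set'' is incorrect: the failure of the distance-three condition at a single isolated-in-$G[S]$ vertex $v$ says nothing about efficiency of $S$; it only bounds the number of $D$-vertices in \emph{that particular} layer $H^v$. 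What you actually need is that the vertices of $D$ in such a layer already form a total $[1,k]$-set of $H$ of size $\le\lfloor k/2\rfloor$, yielding $\gamma_{t[1,k]}(H)\le k/2$ directly and hence the second subcase of condition~4 --- efficiency of $S$ (condition~3) arises instead from the case where \emph{every} vertex of $S$ is isolated in $G[S]$ and satisfies the distance condition, i.e.\ from the $A_1^D=\emptyset$ analogue. Your sufficiency sketch, by contrast, is essentially the correct generalization of the $k=2$ constructions and only needs the careful bookkeeping you already anticipate.
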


\subsection{Some Result in independent $[1,k]$-set for Lexicographic Product of Graphs}
For two given integers $j$ and $k$, an independent subset $D \subseteq V$ is called an independent $[j, k]$-set  if for every vertex $v \in V\setminus D$, we have $j \leq \vert N(v) \cap D \vert \leq k$. The $[j, k]$-independence number of $G$ is the minimum number among cardinalities of independent $[j, k]$-sets in $G$ and is denoted by $\gamma_{i[j,k]}(G)$. The class of all graphs $G$ having at least one independent $[j,k]$-set is denoted by  $\mathcal{D}_{[j,k]}^i$.

\begin{theorem} \label{lexthem2}
	Let  $G$ and $H$ be two graphs. Then, $G \circ H \in \mathcal{D}_{[1,2]}^i$  if and only if one of the following conditions is satisfied:
\begin{enumerate}
\item
$G=K_1$ and $H \in \mathcal{D}^i_{[1,2]}$;
  \item
  $G$ is an efficient domination graph and $\gamma_{i[1,2]}(H)\leq2$;
\item
  $G \in \mathcal{D}_{[1,2]}^i$ and $\gamma_{i[1,2]}(H)=1$.
\end{enumerate}

\end{theorem}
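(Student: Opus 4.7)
The plan is to adapt the framework of Theorem \ref{lexthem} to the independent setting. For any candidate independent $[1,2]$-set $D$ of $G\circ H$, I would work with the projection $S=\{v\in V(G):D\cap V(H^v)\neq\emptyset\}$ and the layer-slices $S'_v=\{u\in V(H):(v,u)\in D\}$, and the analysis will ultimately hinge on a case split on the slice sizes $|S'_v|$.

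The sufficiency direction is a direct construction in each case. Under (1), $K_1\circ H\cong H$, so any independent $[1,2]$-set of $H$ works. Under (2), take $S$ an efficient dominating set of $G$ and $S'$ an independent $[1,2]$-set of $H$ with $|S'|\leq 2$; then $D=S\times S'$ is independent (cross-layer because $S$ is independent in $G$, in-layer because $S'$ is independent in $H$), and a vertex $(v,u)\notin D$ has exactly $|N_H(u)\cap S'|\in[1,2]$ neighbors in $D$ when $v\in S$ (using $S$ independent to kill cross-layer contributions) and exactly $|S'|\in\{1,2\}$ neighbors when $v\notin S$ (using that $v$ has a unique $S$-neighbor). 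Under (3), take $D=S\times\{u\}$ with $S$ an independent $[1,2]$-set of $G$ and $u$ a universal vertex of $H$ (which must exist since $\gamma_{i[1,2]}(H)=1$), and run the analogous verification.

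For necessity, assume $D$ is an independent $[1,2]$-set of $G\circ H$. If $G=K_1$ we are in case (1), so assume $|V(G)|\geq 2$. The preliminary observations are that $S$ is independent in $G$ (two $D$-vertices in distinct layers with adjacent $G$-coordinates would be adjacent in $G\circ H$), each $S'_v$ is independent in $H$, and in fact each $S'_v$ is itself an independent $[1,2]$-set of $H$: because $S$ is independent, there is no contribution to $N_{G\circ H}((v,u))\cap D$ from other layers, so the count at $(v,u)$ with $u\notin S'_v$ must come entirely from $N_H(u)\cap S'_v$ and therefore lies in $[1,2]$. Connectedness of $G$ together with $|V(G)|\geq 2$ then forces $|S'_v|\leq 2$ for every $v\in S$, since a neighbor $v'\notin S$ of $v$ sees all of $S'_v$ through $(v',u)$ and must have count at most $2$.

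I would then split on the profile $\{|S'_v|:v\in S\}$. If every slice has size $1$, each $S'_v$ is a singleton independent $[1,2]$-set of $H$, giving a universal vertex in $H$ and hence $\gamma_{i[1,2]}(H)=1$; the identity $|N_{G\circ H}((v',u))\cap D|=\sum_{v''\in N_G(v')\cap S}|S'_{v''}|$ for $v'\notin S$ collapses to $|N_G(v')\cap S|\in[1,2]$, so $S$ is an independent $[1,2]$-set of $G$ and we land in (3). If every slice has size $2$, we get $\gamma_{i[1,2]}(H)\leq 2$, and the same identity forces $|N_G(v')\cap S|=1$ for every $v'\notin S$, so $S$ is efficient dominating and we land in (2). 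The mixed profile still yields $\gamma_{i[1,2]}(H)=1$ from any size-$1$ slice, and since every $|S'_{v''}|\geq 1$ the bounds on the sum still give $|N_G(v')\cap S|\in[1,2]$, landing again in (3). The main obstacle I anticipate is the mixed profile, where one must carefully verify that heterogeneous slice sizes still combine so that $S$ remains a valid independent $[1,2]$-set of $G$; all other steps are routine once the preliminary facts and the bound $|S'_v|\leq 2$ are in place.
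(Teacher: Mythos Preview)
Your proposal is correct and follows essentially the same approach the paper indicates: the paper's own proof of this theorem is simply ``The proof is similar to the proof of Theorem~\ref{lexthem},'' and what you outline is precisely that adaptation---projecting $D$ to $S\subseteq V(G)$, analyzing the layer-slices $S'_v$, bounding $|S'_v|\leq 2$, and splitting on the slice-size profile. The independence hypothesis genuinely streamlines the argument compared to the total case (no $1$-dependent or $\mathcal{SD}$-type side conditions arise), and the ``obstacle'' you flag in the mixed profile is in fact harmless: since each $|S'_{v''}|\geq 1$ and the sum is at most $2$, one immediately gets $|N_G(v')\cap S|\in\{1,2\}$, so no extra care is needed there.
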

\begin{proof}
The proof is similar to the proof of Theorem \ref{lexthem}.
\end{proof}
We also generalize Theorem \ref{lexthem2} to  independent $[1,k]$-set of  $G \circ H$.

\begin{theorem}
	Let $G$ and $H$ be two graphs. Then, $G\circ H \in \mathcal{D}^i_{[1,k]}$ if and only if one of the following conditions is satisfied:
\begin{enumerate}
\item
$G=K_1$ and $H \in \mathcal{D}^i_{[1,k]}$;
  \item
  $G$ is an efficient domination graph  and $\gamma_{i[1,k]}(H)\leq k$;
  \item
   $G$ has a independent $[1,k]$-set  and $\gamma_{i[1,k]}(H)\leq k/2$.
   \end{enumerate}
   \end{theorem}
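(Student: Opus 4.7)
The plan is to mimic the proof of Theorem \ref{lexthem2} with the same case decomposition, generalizing from $k=2$ to arbitrary $k$ in the same way Theorem \ref{Lexk} generalizes the total version. Any independent set $D$ of $G\circ H$ is described by its projection $S=\{v : H^v\cap D\neq\emptyset\}$ and its traces $T_v=\{u : (v,u)\in D\}$ for $v\in S$, and independence of $D$ forces $S$ to be an independent set of $G$ and each $T_v$ to be an independent set of $H$.

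For the forward direction, I would first isolate the layer-wise extremal case. If some layer $H^v$ contains more than $k$ vertices of $D$, then by the argument of Lemma \ref{Lex3} (for any $v'\in N_G(v)$, every vertex of $H^{v'}$ would have more than $k$ neighbors in $D$) the vertex $v$ has no neighbor in $G$, so connectivity forces $G=K_1$ and the trace is an independent $[1,k]$-set of $H$, which is condition $1$. Assume henceforth $G\neq K_1$, so each layer contains at most $k$ vertices of $D$. The in-layer condition $1\le|N_H(u')\cap T_{v'}|\le k$ for $(v',u')\notin D$ with $v'\in S$ shows that each $T_v$ is an independent $[1,k]$-set of $H$, so $|T_v|\ge\gamma_{i[1,k]}(H)$. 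The cross-layer condition $1\le\sum_{v\in N_G(v')\cap S}|T_v|\le k$ for $v'\notin S$ then gives a dichotomy: if every $v'\notin S$ has exactly one neighbor in $S$ then $S$ is an efficient dominating set of $G$ and $\gamma_{i[1,k]}(H)\le|T_{v_0}|\le k$, yielding condition $2$; otherwise there exist $v_1,v_2\in S$ sharing a neighbor outside $S$, forcing $2\gamma_{i[1,k]}(H)\le|T_{v_1}|+|T_{v_2}|\le k$, and the same sum bound implies $|N_G(v')\cap S|\le k$ for all $v'\notin S$, so $S$ is an independent $[1,k]$-set of $G$, yielding condition $3$.

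For the backward direction, in each case I would build $D$ as a suitable product. Condition $1$ is immediate since $G\circ H\cong H$. For condition $2$, let $S$ be an efficient dominating set of $G$ (automatically independent) and $T$ an independent $[1,k]$-set of $H$ with $|T|\le k$, and set $D=S\times T$; efficiency makes $|N_G(v')\cap S|=1$ for $v'\notin S$, so $(v',u')\notin D$ with $v'\notin S$ has exactly $|T|\in[1,k]$ neighbors in $D$, while $(v',u')$ with $v'\in S$ and $u'\notin T$ is controlled by $T$ being a $[1,k]$-set of $H$; independence of $D$ follows from the independence of $S$ and $T$ together with Lemma \ref{Lexneighbour}. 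For condition $3$, let $S$ be an independent $[1,k]$-set of $G$ and $T$ an independent $[1,k]$-set of $H$ with $|T|\le k/2$, and again set $D=S\times T$; the independence and the lower bound are routine, and the upper bound is checked exactly as in the $k=2$ case of Theorem \ref{lexthem2}.

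The main obstacle is the forward dichotomy rather than the backward construction: one must notice that once the efficient-dominating branch is excluded, the existence of a pair $v_1,v_2\in S$ with a common outside neighbor upgrades the bound from the individual $|T_{v_i}|\le k$ to $|T_{v_1}|+|T_{v_2}|\le k$, which is what sharpens $\gamma_{i[1,k]}(H)\le k$ into $\gamma_{i[1,k]}(H)\le k/2$. The bookkeeping needed to choose a consistent layer structure $\{T_v\}$ so that both the in-layer and the cross-layer bounds hold simultaneously is the same subtlety that produces the $\mathcal{SD}^{k-1}_{[1,k]}$ distinction in the total setting, and it is the step where the $k=2$ proof generalizes least mechanically.
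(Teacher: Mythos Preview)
Your forward direction and the backward direction for conditions 1 and 2 are sound. The genuine gap is the backward direction for condition 3. You claim that with $D=S\times T$ ``the upper bound is checked exactly as in the $k=2$ case,'' but for $v'\notin S$ one has $|N_{G\circ H}((v',u'))\cap D|=|N_G(v')\cap S|\cdot|T|$; when $k=2$ condition 3 forces $|T|=1$ and this reduces to $|N_G(v')\cap S|\le 2$, whereas for $k\ge 3$ the product can reach $k\lfloor k/2\rfloor>k$. In fact condition 3 as written is not sufficient. Take $k=4$, $G=K_{3,3}$ with parts $A=\{a_1,a_2,a_3\}$ and $B=\{b_1,b_2,b_3\}$, and $H=C_4$. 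Then $G$ has no efficient dominating set, $A$ is an independent $[1,4]$-set of $G$, and $\gamma_{i[1,4]}(C_4)=2=k/2$, so only condition 3 holds; yet any independent $[1,4]$-set of $G\circ H$ would project to an independent dominating set of $K_{3,3}$, which must be an entire part, and then every vertex $(b_j,u')$ has $\sum_i|T_{a_i}|\ge 3\cdot 2=6>4$ neighbours in $D$.

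The paper supplies no proof of this theorem (it is listed as a direct generalisation of the $k=2$ version), so there is nothing to compare your argument against; the statement itself needs repair before it can be proved. Your own forward analysis already isolates the correct hypothesis: $S$ must be an independent $[1,\lfloor k/\gamma_{i[1,k]}(H)\rfloor]$-set of $G$, and with that strengthening the product $D=S\times T$ does satisfy the upper bound. Your closing paragraph locates the delicate step in the forward dichotomy, but the actual obstruction is on the construction side.
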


\section{On $[1,2]$-domination number of Lexicographic Products of Graphs} \label{lexproductno}
In this section, we first describe the relation between  the domination and total domination number of  $G \circ H$ with respect to domination and total domination number of its components. Then, we use this relationship to compute $\gamma_{[1,2]}(G\circ H)$ and $\gamma_{t[1,2]}(G\circ H)$. At the end of this section, we generalize results to $\gamma_{[1,k]}(G\circ H)$ and $\gamma_{t[1,k]}(G\circ H)$.
\begin{theorem} \label{lexthem}
	For two arbitrary graphs  $G$ and $H$,
\begin{equation*}
\gamma(G \circ H)= \left\{
\begin{array}{ll}
 \gamma(G) & \text{if } \gamma(H)=1;\\
 \gamma_t(G) & \text{if } G \text{ has a total dominating set};\\
\vert V(G)\vert \times \vert V(H) \vert & \text{otherwise}.
\end{array} \right.
\end{equation*}
\end{theorem}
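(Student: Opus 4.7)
The plan is to prove each of the three cases by exhibiting a matching upper bound (a concrete dominating set) and a matching lower bound obtained by projecting any dominating set of $G\circ H$ onto $V(G)$.

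For the upper bounds I would argue case by case. In case 1, let $D_G$ be a $\gamma$-set of $G$ and fix $h^{\star}\in V(H)$ with $N_H[h^{\star}]=V(H)$; then $D_G\times\{h^{\star}\}$ dominates $G\circ H$, since for $(v,h)$ with $v\in D_G$ the vertex $(v,h^{\star})$ already lies in the set and dominates $(v,h)$ inside $H^{v}$, while for $v\notin D_G$ any $G$-neighbour $v'\in D_G$ yields the dominator $(v',h^{\star})$. In case 2, let $T$ be a $\gamma_t$-set of $G$ and let $h^{\star}\in V(H)$ be arbitrary; totality of $T$ gives every $v\in V(G)$ a $G$-neighbour in $T$, hence $T\times\{h^{\star}\}$ dominates $G\circ H$. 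In case 3 the trivial bound $\gamma(G\circ H)\leq |V(G)|\cdot |V(H)|$ settles the upper bound.

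For the lower bounds I would use a standard projection. Given a minimum dominating set $D^{\star}$ of $G\circ H$, define $D_G=\{v\in V(G):(v,h)\in D^{\star}\text{ for some }h\}$ and, for each $v\in D_G$, the fibre $S_v=\{h:(v,h)\in D^{\star}\}$. A direct check shows $D_G$ dominates $G$: for any $v\notin D_G$ the vertex $(v,h)$ must have a $D^{\star}$-neighbour whose $G$-coordinate differs from $v$, and this coordinate then lies in $D_G$. This already gives $|D^{\star}|\geq |D_G|\geq \gamma(G)$, settling case 1. For case 2 I would establish the sharper bound $|D^{\star}|\geq \gamma_t(G)$ as follows. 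If $v\in D_G$ has no $G$-neighbour in $D_G$, then every vertex $(v,h)\notin D^{\star}$ of $H^{v}$ must be dominated from inside $H^{v}$, forcing $S_v$ to be a dominating set of $H$ and hence $|S_v|\geq \gamma(H)\geq 2$. Writing $I$ for the collection of such $D_G$-isolated vertices yields $|D^{\star}|\geq |D_G|+|I|$; since case 2 assumes $G$ has a total dominating set and hence no isolated vertex, each $v\in I$ admits some $G$-neighbour $n(v)$, and a short verification shows that $T=D_G\cup\{n(v):v\in I\}$ is a total dominating set of $G$ of size at most $|D^{\star}|$.

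The main obstacle will be the lower bound in case 2, and specifically justifying that the inflation $D_G\mapsto T$ can be carried out without exceeding $|D^{\star}|$. This is precisely where the hypothesis $\gamma(H)\geq 2$ enters: without it, a $D_G$-isolated $v$ could get away with $|S_v|=1$ (choosing a vertex whose closed $H$-neighbourhood covers $V(H)$), and the charging $|D^{\star}|\geq |D_G|+|I|$ collapses, so the case split on $\gamma(H)$ is essential rather than cosmetic. Case 3 then becomes a routine counting exercise: when $G$ has at least one isolated vertex $v$, the layer $H^{v}$ is a connected component of $G\circ H$ isomorphic to $H$, each such component must be dominated internally, and combining these obligations over all $G$-vertices yields the stated product formula.
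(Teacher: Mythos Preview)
Your handling of the first two cases matches the paper's argument: both proofs project a minimum dominating set $D$ of $G\circ H$ onto $V(G)$, observe that the projection dominates $G$, and when $\gamma(H)\geq 2$ exploit the surplus vertices sitting in each isolated fibre to promote the projection to a total dominating set of no greater size. The paper carries out this promotion by an iterated swap (delete $D\cap V(H^v)$, insert $\{(v,u),(v',u)\}$ for some $G$-neighbour $v'$ of $v$, invoking Lemma~\ref{Lexneighbour}), whereas you carry it out via the counting inequality $|D^\star|\geq|D_G|+|I|$ followed by the augmentation $T=D_G\cup\{n(v):v\in I\}$; these are two packagings of the same idea, and yours is arguably the cleaner of the two.

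Your case~3 sketch, however, does not go through. From the observation that an isolated vertex $v\in V(G)$ makes $H^v$ a component of $G\circ H$ isomorphic to $H$ you obtain only a contribution of $\gamma(H)$, not $|V(H)|$, to $\gamma(G\circ H)$; the closing phrase ``combining these obligations over all $G$-vertices yields the stated product formula'' is unsupported. In fact the stated value $|V(G)|\cdot|V(H)|$ is false in general (take $G=K_1$ and $H=P_4$: then $\gamma(G\circ H)=\gamma(P_4)=2\neq 4$), and the paper's own proof is silent on this case as well---its swap step presupposes that every $v\in S$ has a $G$-neighbour, which is precisely the hypothesis of case~2. So the defect here lies in the statement rather than in your method; under the paper's standing assumption that $G$ is connected and nontrivial, case~3 is vacuous and your first two cases already cover everything.
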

\begin{proof}

Let $D$ be a $\gamma$-set of $G\circ H$ and $S=\{v:(v,u)\in D\}$.
If $S$ is not a dominating set of $G$, then there is a vertex $v' \in V  \setminus  S$  which is not dominated by $S$.
It is easy to see that there is no vertex $(v',u')\in D$ such that $\{v,v'\}\in E(G)$.  Hence, vertices of $V(H^{v'})$ are dominated by itself. So,  $v' \in S$ and which is a contradiction.
Therefore, $S$ is a dominating set of $G$ and  $\gamma(G\circ H) \geq  \gamma(G)$.

Suppose that $\gamma(H)=1$ and $S$ is a $\gamma$-set for $G$.  Then, there exists a vertex $u \in V(H)$ such $u$ dominates all vertices of $H$ and $D=\{(v,u):v\in S\}$  is a $\gamma$-set for $G\circ H$.   So  $\gamma(G\circ H) =  \gamma(G)$.

If $\gamma(H)>1$, since $D$ is a dominating set of $G \circ H$, every vertex like $(v',u')\in V(H^{v'})$ is dominated by $D$. Assume that there is a vertex $v \in S$ such that $ \vert N_G(v) \cap S \vert=0$. Then, there is no vertex like $(w,u)\in D$ such that $\{v,w\}\in E(G)$. By  $\gamma (H)>1$, there are at least two vertices of $V(H^v)$ in $D$ such as $(v,u)$ and $(v,u')$.  By lemma \ref{Lexneighbour}, if $\{v,v'\}\in E(G)$ then $D'=(D\setminus V(H^v))\cup\{(v,u),(v',u)\}$ is a dominating set of $G\circ H$.  Clearly, $\vert D' \vert \leq \vert D \vert$  and $S'=\{v:(v,u)\in D'\}$ is a total dominating set of $G$. Therefor, we have $\vert  D \vert=\vert D' \vert=\gamma_t(G)$.
\end{proof}

\begin{theorem} \label{lexthemt}
	Let  $G$ and $H$ be two graphs. Then $G \circ H$ has a total dominating set if and only if $G$ has a total dominating set. In addition
$$	\gamma_t(G \circ H)=\gamma_t(G). $$
\end{theorem}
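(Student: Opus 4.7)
The plan is to prove the biconditional and the equality $\gamma_t(G \circ H) = \gamma_t(G)$ in one stroke, by showing that a total dominating set on either side can be converted to one on the other of no larger size. Throughout I assume $G$ is connected and $|V(G)| \geq 2$, consistent with the standing convention of the section (the case $G = K_1$ is degenerate since then $G \circ H = H$).

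For the easy direction, given a minimum total dominating set $S$ of $G$ I take $D = S \times \{u_0\}$ for any fixed $u_0 \in V(H)$. Every $(x,y) \in V(G \circ H)$ has some $v \in S$ with $\{v,x\} \in E(G)$, and then $(v,u_0) \in D$ is adjacent to $(x,y)$ by the definition of the lexicographic product; thus $D$ is a total dominating set of $G \circ H$, yielding both the existence implication and $\gamma_t(G \circ H) \leq |D| = \gamma_t(G)$.

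The harder direction starts from a minimum total dominating set $D$ of $G \circ H$ and its projection $S = \{v : (v,u) \in D \text{ for some } u\}$. The aim is to show $S$, after possibly modifying $D$ without increasing its size, is a total dominating set of $G$. Suppose some $v' \in V(G)$ has no $G$-neighbor in $S$. A short case analysis rules out $v' \notin S$ (otherwise no vertex of $H^{v'}$ could be dominated from outside its layer, and nothing of $D$ sits inside it); thus $v' \in S$, and since every vertex of $H^{v'}$ must then be dominated from inside the layer, the slice $\{u : (v',u) \in D\}$ is a total dominating set of $H$, forcing $|D \cap V(H^{v'})| \geq 2$.

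The key trick is a local layer-swap: pick any $G$-neighbor $v''$ of $v'$ (which exists because $G$ is connected and nontrivial) and replace $D \cap V(H^{v'})$ by $\{(v',u_0), (v'',u_0)\}$ for an arbitrary $u_0 \in V(H)$. The resulting set $D'$ still totally dominates $G \circ H$: old dominators outside $V(H^{v'})$ are untouched, vertices of $V(H^{v'}) \cup V(H^{v''})$ are covered by the new pair through the $G$-edge $\{v',v''\}$, and any vertex previously dominated by some $(v',z) \in D$ via a $G$-edge $\{v',w\}$ is now dominated by $(v',u_0)$ through the same $G$-edge. Since $|D \cap V(H^{v'})| \geq 2$, we get $|D'| \leq |D|$. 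The new projection enlarges $S$ by $v''$ while retaining $v'$, so the uncovered set $B = \{v : N_G(v) \cap S = \emptyset\}$ loses $v'$ and gains nothing, shrinking strictly; iterating finitely many times yields a projection $S^{\ast}$ that is a total dominating set of $G$ with $|S^{\ast}| \leq |D| = \gamma_t(G \circ H)$. The main obstacle, and the step I would be most careful about, is the correctness of the layer-swap — one must verify that vertices which used to depend on members of $V(H^{v'})$ via $G$-edges are still dominated by the replacement $(v',u_0)$, a check for which Lemma \ref{Lexneighbour} supplies exactly the right invariance.
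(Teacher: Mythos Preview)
Your proof is correct and rests on the same exchange idea as the paper's: project a minimum total dominating set $D$ of $G\circ H$ to $S\subseteq V(G)$, and whenever some $v'$ has no $G$-neighbour in $S$ use the fact that $|D\cap V(H^{v'})|\ge 2$ to swap mass from the $H$-layer over $v'$ into an adjacent layer, appealing to Lemma~\ref{Lexneighbour} for invariance. The only noteworthy difference is organizational: the paper first disposes of the case $\gamma(H)>1$ by invoking Theorem~\ref{lexthem} (so that $\gamma(G\circ H)=\gamma_t(G)$ sandwiches $\gamma_t(G\circ H)$), and only runs the explicit swap argument when $\gamma(H)=1$, whereas you give a single uniform argument that works regardless of $\gamma(H)$; your wholesale replacement of $D\cap V(H^{v'})$ by two vertices is also slightly tidier than the paper's one-in/one-out swap, but the substance is the same.
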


\begin{proof}
It is known that $\gamma(G)\leq \gamma_t(G )$. By Theorem \ref{lexthem}, if $ G \in D_{[1,2]}^t$ and $\gamma(H)>1$, then $\gamma_t(G \circ H)=\gamma_t(G )$.
Suppose that $\gamma(H)=1$ and $u^{\star}$ dominates all vertices of $H$.
It is easy to see that for every $\gamma$-set $S$ of $G$,   $D=\{(v,u^{\star}):v\in S \} $ is a total set of $G \circ H$. So
\begin{equation} \label{eq1}
\gamma_t(G\circ H) \leq \gamma_t(G).
\end{equation}
Let $D$ be  a $\gamma_t$-set of $G\circ H$  and $S'=\{v:(v,u)\in D\}$ is not a total set of $G$. Then, there exists a vertex $x\in S'$ and two adjacency vertices $y,y'\in V(H)$ such that $\vert N(x)\cap S'\vert =0$ and $(x,y), (x,y')\in D$.
Similar to proof of Theorem \ref{lexthem},  we can remove $(x,y')$ from $D$ and add $(x',y)$ to $D$ such that $\{x,x'\}\in E(G)$. For all $v\in V(G)$ where $\vert  D \cap V(H^v) \vert \geq 2$, we can do this process  and construct a new total dominating set $D'$ for $G\circ H$  such that its cardinality is not more than $D$  and $\vert V(H^v) \cap D' \vert =1$. Since the set $S''=\{v:(v,u)\in D\}$ is a total set of $G$ and  $\vert D\vert$ is a $\gamma_t$-set of $G \circ H$, then we have

 \begin{equation} \label{eq2}
 \gamma_t(G \circ H) \geq \vert S'' \vert \geq \gamma_t(G ).
 \end{equation}

  Therefore, by Equations \ref{eq1} and \ref{eq2}, we have $\gamma_t(G\circ H) = \gamma_t(G)$.

\end{proof}
	In Lemma \ref{Lex3}, we have shown  that for a nontrivial graph $G$, every total $[1,2]$-set of $G \circ H$
	 contains  at most two vertices of each $H_{-}$Layer. We generalize this result for  $[1,2]$-set of $G \circ H$.
	
	 \begin{lemma}
	 Let $G$ and $H$ be two nontrivial graphs. Then every $[1,2]$-set of $G \circ H$ contains  at most two vertices of each $H_{-}$Layer or $\gamma_{[1,2]}(G \circ H)=\vert V(G)\vert \times\vert V(H) \vert $ but not both.
	 	\end{lemma}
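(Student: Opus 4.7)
The plan is to establish the dichotomy by showing that whenever the first alternative fails for a $[1,2]$-set $D$ of $G\circ H$, the set $D$ must coincide with $V(G\circ H)$, so that $|D| = |V(G)|\cdot|V(H)|$, which is the second alternative. Concretely, I would fix a $[1,2]$-set $D$ containing three distinct vertices $(v,h_1),(v,h_2),(v,h_3)$ of a common $H$-layer $H^v$, and argue as follows.

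First, since $G$ is connected and nontrivial, there exists $v' \in N_G(v)$. By the definition of the lexicographic product (cf.\ Lemma~\ref{Lexneighbour}), every vertex $(v',h') \in H^{v'}$ is adjacent to each of $(v,h_1),(v,h_2),(v,h_3)$, so $|N_{G\circ H}((v',h'))\cap D|\geq 3$; the $[1,2]$-condition then forces $(v',h')\in D$, and hence $H^{v'}\subseteq D$.

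Next, provided $|V(H)|\geq 3$, the layer $H^{v'}$ now contributes at least three vertices of $D$, and the same argument propagates: for every $w \in N_G(v')$, every vertex of $H^w$ sees at least three neighbours in $D$ and therefore lies in $D$. Taking $w = v$ also forces $H^v \subseteq D$. A breadth-first traversal along edges of the connected graph $G$ spreads this conclusion to every $x \in V(G)$, yielding $D = V(G\circ H)$ and so $|D| = |V(G)|\cdot|V(H)|$. The residual case $|V(H)|=2$ is vacuous, since no $H$-layer can then contain three vertices, so the first alternative holds automatically. The clause ``but not both'' is also immediate: when $|V(H)|\geq 3$, the full set $V(G\circ H)$ has $|V(H)|\geq 3$ vertices in every layer and therefore violates the first alternative.

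The only real subtlety is the propagation step; it relies on the newly-filled layer $H^{v'}$ still supplying at least three neighbours to vertices of adjacent layers, which is exactly why the hypothesis $|V(H)|\geq 3$ is needed and why the $|V(H)|=2$ branch has to be separated off. Once that is in place, the uniform layered adjacency of $G\circ H$ together with the connectedness of $G$ closes the argument without further work.
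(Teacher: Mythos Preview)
Your argument is correct and is precisely the propagation argument the paper has in mind. The paper actually gives no separate proof of this lemma; the same reasoning appears, somewhat compressed, inside the proof of Theorem~\ref{lexthem2}: a layer with more than two $D$-vertices forces every adjacent layer entirely into $D$, and connectedness of $G$ (with $|V(H)|\ge 3$) spreads this to the whole graph. Your write-up is in fact cleaner than the paper's, since you make explicit why $|V(H)|\ge 3$ is needed for the inductive step and you dispose of $|V(H)|=2$ separately.

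One small remark on the ``but not both'' clause: you handle it for $|V(H)|\ge 3$, but when $|V(H)|=2$ the first alternative holds trivially for every $D$, so ``not both'' would require $\gamma_{[1,2]}(G\circ H)<2|V(G)|$. This is not hard, but it is not literally covered by what you wrote. Given how loosely the lemma is phrased in the paper (and that the full vertex set is always a $[1,2]$-set, which already makes the literal ``every $[1,2]$-set'' reading problematic), the intended content is exactly the per-$D$ dichotomy you prove, and your proof of that is complete.
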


\begin{theorem} \label{lexthem2}
Let $G$ and $H$ be two graphs. Then, $\gamma_{[1,2]}(G\circ H)$ can be computed as follow:
\begin{description}
 \item[Case 1:]
  $H$ has  an isolated vertex:
 \begin{enumerate}
 \item [a)]
  If $G\in \mathcal{D}^t_{[1,2]}$,  then $\gamma_{[1,2]}(G \circ H)=\gamma_{t[1,2]}(G)$;
  \item [b)]
  If   $\gamma_{[1,2]}(H)=2$ and $S \in \mathcal{SD}^{2}_{[1,2]}(G)$, then $\gamma_{[1,2]}(G \circ H)=\min\{\vert S\vert +\alpha \}$ where $\alpha$ is the number of vertices in $S$ such that $\vert N(v)\cap S\vert =0$;
  \item [c)]
  Otherwise, $\gamma_{[1,2]}(G \circ H)=\vert V(G)\vert \times \vert V(H)\vert $.
  \end{enumerate}
  \item[Case 2:]
   $H$ does not have  an isolated vertex:
 \begin{enumerate}
 	\item [a)]
 	If $\gamma_{[1,2]}(H)=1$ and  $S$ is a $1$-dependent $[1,2]$-set of $G$ with minimum cardinality,  then $\gamma_{[1,2]}(G \circ H)=\vert S\vert $;
 	\item [b)]
 	If $S$ is a $1$-dependent total $[1,2]$-set of $G$  with minimum cardinality, then $\gamma_{[1,2]}(G \circ H)=\vert S\vert $;
  \item [c)]
  If $\gamma_{[1,2]}(H)=2$ and $S\in \mathcal{SD}^{1}_{[1,2]}(G)$, then $\gamma_{[1,2]}(G \circ H)=\min\{\vert S\vert +\alpha \}
  $ where $\alpha$ is the number of vertices in $S$ such that $\vert N(v)\cap S\vert =0$;
  \item [d)]
  Otherwise, $\gamma_{[1,2]}(G \circ H)=\vert V(G)\vert \times \vert V(H)\vert $.
\end{enumerate}
\end{description}
\end{theorem}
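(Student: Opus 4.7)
The plan is to parallel the projection-and-reconstruction strategy used for total $[1,2]$-sets in Theorem \ref{lexthem}. Given a $\gamma_{[1,2]}$-set $D$ of $G \circ H$, define its projection $S = \{v \in V(G) : V(H^v) \cap D \neq \emptyset\}$ and partition it into $S_1 = \{v \in S : |V(H^v) \cap D| = 1\}$ and $S_2 = S \setminus S_1$. By the preceding lemma either every $H$-Layer meets $D$ in at most two vertices, or $\gamma_{[1,2]}(G \circ H) = |V(G)| \cdot |V(H)|$; the latter gives exactly Cases 1(c) and 2(d), so the partition $(S_1, S_2)$ is well-defined in the remaining branches.

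For the upper bounds I build explicit product-type constructions and verify the $[1,2]$-count through Lemma \ref{Lexneighbour}. In Cases 1(a), 2(a), and 2(b) take $D = T \times \{u^\star\}$, where $T \subseteq V(G)$ is the structured subset named in the statement (a total $[1,2]$-set, a $1$-dependent $[1,2]$-set, or a $1$-dependent total $[1,2]$-set respectively) and $u^\star$ is either isolated in $H$ (Case 1(a)), a universal vertex of $H$ realizing $\gamma_{[1,2]}(H) = 1$ (Case 2(a)), or arbitrary (Case 2(b)). In Cases 1(b) and 2(c), where $\gamma_{[1,2]}(H) = 2$, use the hybrid $D = (X \times \{u^\star, u^\bullet\}) \cup ((S \setminus X) \times \{u^\star\})$ with $X = \{v \in S : |N_G(v) \cap S| = 0\}$ and $\{u^\star, u^\bullet\}$ a $\gamma_{[1,2]}$-set of $H$; the distance-$3$ separation encoded in $\mathcal{SD}^2_{[1,2]}(G)$ and $\mathcal{SD}^1_{[1,2]}(G)$ prevents any outside layer from accumulating three dominators. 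Counting vertices then yields $|S|$ or $|S| + \alpha$ as claimed.

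For the lower bound I would transform any $\gamma_{[1,2]}$-set $D$ into one of these canonical forms without increasing $|D|$ and then read off the structural requirements that its projection $S$ must meet. The exchange engine is Lemma \ref{Lexneighbour}: if $v \in S_2$ has a $G$-neighbor $v' \notin S$, one of the two layer-$v$ vertices may be relocated to $H^{v'}$ without loss, so after iterating one may assume $S_2 \subseteq \{v : N_G(v) \cap S = \emptyset\}$. An outside vertex $(w,x)$ with $w \notin S$ is dominated purely through $G$-edges, which forces $S$ to be a $[1,2]$-dominating set of $G$; when $H$ has no isolated vertex, layers $H^v$ with $v \in S$ must additionally be dominated, forcing total-ness of $S$ in Case 2(b) or $\gamma_{[1,2]}(H) = 1$ in Case 2(a). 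When none of the structural options is realizable, every layer must contain all of its vertices, collapsing to Cases 1(c) and 2(d).

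The hard part will be the exchange step: one must check for each move that no vertex in $G \circ H$ suddenly acquires three $D$-neighbors, and justify that the distance-$3$ separation inside $\mathcal{SD}^{\,\cdot}_{[1,2]}$ is genuinely necessary rather than merely sufficient. A secondary subtlety is the interpretation of the minima in Cases 1(b) and 2(c) as being taken over all admissible $S$, together with the fact that $\gamma_{[1,2]}(G \circ H)$ is ultimately the smallest value produced by whichever sub-cases apply to the given pair $(G, H)$; this comparison should be spelled out explicitly when two branches are simultaneously valid.
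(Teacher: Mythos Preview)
Your plan is essentially the paper's own approach: project a $\gamma_{[1,2]}$-set $D$ of $G\circ H$ onto $S\subseteq V(G)$, invoke the preceding lemma to bound layer intersections by two (or collapse to the trivial case), build the explicit product/hybrid sets for the upper bounds, and read off the structural constraints on $S$ for the lower bounds. The paper in fact only writes out Case~1 and declares Case~2 analogous; its argument for the ``otherwise'' branch is the propagation you allude to (a layer with more than two $D$-vertices forces every adjacent layer into $D$, hence all of $V(G)\times V(H)$), and its constructions for 1(a) and 1(b) are exactly your $T\times\{u^\star\}$ and $(X\times\{u^\star,u^\bullet\})\cup((S\setminus X)\times\{u^\star\})$. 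Your flagged subtleties---verifying the exchange move never creates a third dominator, and clarifying that the displayed value is the minimum over all applicable sub-cases---are real and are not spelled out in the paper's rather terse proof, so carrying them through carefully would in fact improve on what is written there.
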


\begin{proof}
	We just show the first case, since the second one is easily obtained given the first.
	
%	We prove first case and by using this case, we can prove second case.	
Let $u^{\star}$ be an isolated vertex of $H$.
We claim that if none of the following conditions are met, then $\gamma_{[1,2]}(G \circ H)=\vert V(G)\vert \times \vert V(H)\vert $:
\begin{enumerate}
	\item[a)]
 $G \in D_{[1,2]}^t$;
 \item[b)]
  $\gamma_{[1,2]}(H)\leq 2$ and $\mathcal{SD}^2_{[1,2]}(G)\neq \emptyset$.
\end{enumerate}

	If $S$ is a total $[1,2]$-set $S$ of $G$, then $D=\{(v,u^{\star}):v\in S\}$ is a total $[1,2]$-set for $G\circ H$.
	If $G \notin D_{[1,2]}^t$ and $D$ is a $[1,2]$-set of $G\circ H$, then recall that $S'=\{v:(v,u)\in D \}$ is a $[1,2]$-set for $G$. Assume $S'$ is not $2$-dependent, i.e. there is a vertex $x\in S'$ such that $\vert N(x)\cap S' \vert >2$. It is easy to verify that $\vert V(H^x)\cap D\vert >2$. Hence, $V(H^x)\subseteq D$. If $\vert V(H) \vert >2$ then for every vertex $w \in N(v)$, we have $V(H^w) \subseteq D$. We continue this approach and since $V(G)$ is finite, it will terminate as soon as all vertices are visited. So for all $v\in V(G)$ we have $V(H^v)\subseteq D$. Therefore, we obtain  $\gamma_{[1,2]}(G \circ H)=\vert V(G)\vert \times \vert V(H)\vert $.
In addition, for every $v \in S$ such that $\vert N(v)\cap S\vert =0$,  $V(H^v)$ must be dominated by at most two vertices of $V(H^v)$. Therefore, $\gamma_{[1,2]}(H)=2$. Now, we have two cases to consider.

\begin{enumerate}
\item [b1)]
	For every $\gamma_{t[1,2]}$-set $S$ of $G$, $D=\{(v,u^{\star}):v\in S\}$ is a $\gamma_{t[1,2]}$-set for $G\circ H$. See Lemma \ref{lexthem};
	\item [b2)]
	Let $\{u^{\star},u^{\bullet}\}$ be a $[1,2]$-set for $H$,  $G \notin D_{[1,2]}^t$ and for every $2$-dependent $[1,2]$-set $S$ of $G$,  $S_{\alpha}$ is defined as $\{v:N(v)\cap S=\emptyset\}$. Vertices of $\bigcup_{x\in S\setminus S_{\alpha}}V(H^x)$ are dominated by $(x',u^{\star})$ where $\{x,x'\}\in E(G)$, and  vertices of $\bigcup_{x\in S_{\alpha}}V(H^x)$ are dominated by $(x,u^{\star})$ and $(x,u^{\bullet})$. So
$\{(v,u^{\star}):v \in S\}\cup\{(v,u^{\bullet}):v \in S\wedge\vert N(v)\cap S\vert  =0\}$  is a $\gamma_{[1,2]}$-set for $G \circ H$ if and only if $S\in \mathcal{SD}^2_{[1,2]}(G)$.
\end{enumerate}
\end{proof}

\begin{theorem} \label{lexthemt2}
Let $G$ and $H$ be two graphs. Then,  $\gamma_{t[1,2]}(G\circ H)$ can be computed as follow:
\begin{description}
	\item [Case 1:]
	 $H$ has  an isolated vertex:
	\begin{enumerate}
		\item [a)]
  If $G\in \mathcal{D}^t_{[1,2]}$,  then $\gamma_{t[1,2]}(G \circ H)=\gamma_{t[1,2]}(G)$;
  \item [b)]
Otherwise, $G\circ H\notin \mathcal{D}^t_{[1,2]}$.
  \end{enumerate}
   \item [Case 2:]
 	 $H$ does not have  an isolated vertex, then one of the following conditions holds.
 \begin{enumerate}
  \item [a)]
   If $S$ is a $1$-dependent total $[1,2]$-set of $G$ with minimum cardinality, then $\gamma_{t[1,2]}(G \circ H)=\vert S\vert $;
  \item [b)]
   If $\gamma_{[1,2]}(H)=2$ and $S\in \mathcal{SD}^{1}_{[1,2]}(G)$ then $\gamma_{t[1,2]}(G \circ H)=min\{\vert S\vert +\alpha \}
  $ where  $\alpha$ is the number of vertices in $S$ such that $\vert N(v)\cap S\vert =0$;
  \item [c)]
  $G\circ H\notin \mathcal{D}^t_{[1,2]}$ and $\gamma_{t[1,2]}(G\circ H)$ is undefined since there is no total $[1,2]$-set for $G$.
\end{enumerate}
\end{description}
\end{theorem}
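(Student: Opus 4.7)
The plan is to apply Corollary \ref{Lex2}, which already shows that every total $[1,2]$-set $D$ of $G\circ H$ admits a replacement $D'$ with $|D'|=|D|$ built from structural objects on $G$ and $H$: an efficient dominating set of $G$ together with a size-$2$ total $[1,2]$-set of $H$; a total $[1,2]$-set of $G$ together with an isolated vertex of $H$; a $1$-dependent total $[1,2]$-set of $G$ paired with any single vertex of $H$; or a set $S'\in\mathcal{SD}^{1}_{[1,2]}(G)$ together with a size-$2$ total $[1,2]$-set of $H$. Hence $\gamma_{t[1,2]}(G\circ H)$ equals the minimum of $|D'|$ over the applicable cases, and the proof reduces to a finite case analysis on which of these structures $G$ and $H$ admit.

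For Case 1, when $H$ has an isolated vertex $u^{\star}$, I would first note that $H\notin \mathcal{D}^{t}_{[1,2]}$, so the two Corollary \ref{Lex2} cases requiring a size-$2$ total $[1,2]$-set of $H$ are ruled out, as is the case requiring $H$ to have no isolated vertex. Only the case using the isolated vertex survives, and it needs $G\in\mathcal{D}^{t}_{[1,2]}$. Thus if $G\in \mathcal{D}^{t}_{[1,2]}$, I would take a $\gamma_{t[1,2]}$-set $S$ of $G$, form $D'=S\times\{u^{\star}\}$, and invoke the sufficiency direction of Theorem \ref{lexthem} to verify it is a total $[1,2]$-set; the matching lower bound follows from Corollary \ref{Lex2}, giving $\gamma_{t[1,2]}(G\circ H)=\gamma_{t[1,2]}(G)$. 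If instead $G\notin \mathcal{D}^{t}_{[1,2]}$ (including the degenerate case $G=K_1$, since then $G\circ H=H\notin \mathcal{D}^{t}_{[1,2]}$), no case of Corollary \ref{Lex2} applies, so $G\circ H\notin\mathcal{D}^{t}_{[1,2]}$.

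For Case 2, when $H$ has no isolated vertex, the isolated-vertex case of Corollary \ref{Lex2} is excluded. Two families remain: the $1$-dependent total $[1,2]$-set case of $G$ giving $|D'|=|S|$, and the $\mathcal{SD}^{1}_{[1,2]}(G)$ case giving $|D'|=|S'|+\alpha$ where $\alpha$ counts vertices of $S'$ with no neighbor in $S'$; the latter requires $\gamma_{t[1,2]}(H)=2$. The efficient-dominating-set case is subsumed, since an EDS is exactly an $\mathcal{SD}^{1}_{[1,2]}$-set with $\alpha=|S'|$. Accordingly: (2a) if $G$ has a $1$-dependent total $[1,2]$-set, I construct $S\times\{v\}$ for any $v\in V(H)$ and a smallest such $S$, obtaining $\gamma_{t[1,2]}(G\circ H)=|S|$; (2b) otherwise, if $\mathcal{SD}^{1}_{[1,2]}(G)\neq\emptyset$ and $\gamma_{t[1,2]}(H)=2$, the minimum of $|S'|+\alpha$ over $S'\in\mathcal{SD}^{1}_{[1,2]}(G)$ is the answer; (2c) if neither construction is available, Corollary \ref{Lex2} leaves no viable $D'$ and $G\circ H\notin\mathcal{D}^{t}_{[1,2]}$.

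The main obstacle I anticipate is the bookkeeping in (2a) verifying that $|N_{G\circ H}((v',w))\cap (S\times\{v\})|$ always lies in $[1,2]$: neighbors arise both from $G$-edges, contributing $|N_G(v')\cap S|$, and from the extra copy $(v',v)$ when $v'\in S$ and $w\in N_H(v)$, and the $1$-dependence of $S$ is precisely what keeps the total at most $2$. A subtler point is arguing that the minimum in (2b) is truly achieved by the Corollary \ref{Lex2} Case $4$ construction rather than by some hybrid scheme; this is ensured because Corollary \ref{Lex2} exhaustively classifies the forms every total $[1,2]$-set of $G\circ H$ can take, so any candidate $D$ already satisfies $|D|\geq |S'|+\alpha$ for some admissible $S'$.
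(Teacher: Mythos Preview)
Your proposal is correct and follows essentially the same approach as the paper, whose proof simply defers to the analogous argument for Theorem~\ref{lexthem2}; your direct appeal to Corollary~\ref{Lex2} to classify every total $[1,2]$-set of $G\circ H$ and then minimize over the admissible constructions is exactly the distilled content of that argument. The subtlety you flag about the lower bound is handled precisely as you say, via the exhaustiveness of Corollary~\ref{Lex2}.
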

 \begin{proof}
 	The proof is similar to the proof of Theorem \ref{lexthem2}.
 \end{proof}

\begin{example}
Let $G_1$ and $G_2$ be two graphs shown in Figure \ref{fig:lex1}. $G_1\notin \mathcal{D}^t_{[1,2]}$, so for any graph $H$ such that $\gamma_{[1,2]}(H)>2$, $\gamma_{[1,2]}(G \circ H)=\vert V(G)\vert \times\vert V(H)\vert $. Since $G_2$  has a total $[1,2]$-set but it does not have any $1$-dependent total $[1,2]$-set, so for any graph $H$ without isolated vertices such that $\gamma_{[1,2]}(H)>2$, we have  $\gamma_{[1,2]}(G \circ H)=\vert V(G)\vert \times\vert V(H)\vert $.
\begin{figure}[h!]
   \centering
     \includegraphics[width=0.4\textwidth]{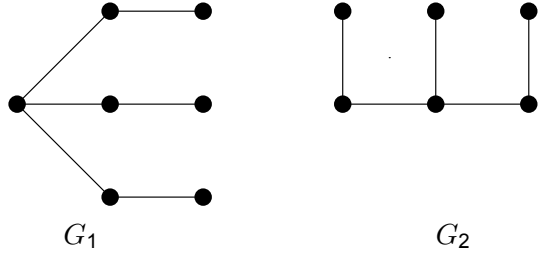}
      \caption{$G_1 \notin \mathcal{D}^t_{[1,2]}$ and $G_2 \in \mathcal{D}^t_{[1,2]}$}
      \label{fig:lex1}
\end{figure}
\end{example}
In the sequel, we compute $\gamma_{i[1,2]}(G \circ H)$  and $\gamma_{i[1,k]}(G \circ H)$ which can be proved  as Theorem \ref{lexthem2}.

\begin{theorem} \label{lexthemind}
	Let $G$ and $H$ be two graphs.
 \begin{enumerate}
  \item [a)]
   If $G$  has  an efficient dominating and $H$  has an independent $[1,2]$-set of size of at most $2$, then $\gamma_{i[1,2]}(G \circ H)= \gamma_{\text{efficient}}(G) \times \gamma_{i[1,2]}(H) $;
  \item [b)]
  If $G$  has an independent $[1,2]$-set  and $H$  has an independent $[1,2]$-set  like $S'$ such that $\vert S' \vert=1$, then $\gamma_{i[1,2]}(G \circ H)=\gamma_{i[1,2]}(G)$;
  \item [c)]
  Otherwise, $G \circ H$ has not independent $[1,2]$-sets.
\end{enumerate}
\end{theorem}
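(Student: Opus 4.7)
The plan is to follow the projection-and-fiber strategy used in Theorem \ref{lexthem2} and its supporting lemmas. Suppose $D$ is an independent $[1,2]$-set of $G \circ H$; set $S = \{v \in V(G) : \exists u,\ (v,u) \in D\}$ and $S_v = \{u \in V(H) : (v,u) \in D\}$ for $v \in S$. Since $(v,u)$ and $(v',u')$ with $v \neq v'$ are adjacent in $G \circ H$ precisely when $\{v,v'\} \in E(G)$ (regardless of $u,u'$), independence of $D$ forces $S$ to be independent in $G$ and each fiber $S_v$ to be independent in $H$. The $[1,2]$-condition applied at $(v,x)$ with $v \in S$ and $x \notin S_v$ gives, using the independence of $S$, $|N_H(x) \cap S_v| \in [1,2]$, so every $S_v$ is an independent $[1,2]$-set of $H$; applied at $(w,x)$ with $w \notin S$ it gives $\sum_{v' \in N_G(w) \cap S} |S_{v'}| \in [1,2]$, which already forces $|S_v| \leq 2$ whenever $v$ has a neighbor in $G$.

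Assuming $G \neq K_1$, therefore $|S_v| \in \{1,2\}$ for all $v \in S$. Partition $S = S_1 \cup S_2$ by $|S_v|$. For $v \in S_1$ with unique fiber vertex $u_v$, the $[1,2]$-condition demands $N_H(x) \cap \{u_v\} \neq \emptyset$ for every $x \neq u_v$, so $u_v$ is universal in $H$ and $\gamma_{i[1,2]}(H) = 1$. For $v \in S_2$, the sum constraint forces every $w \in N_G(v) \setminus S$ to have $v$ as its unique $S$-neighbor, so the $S_2$-vertices behave efficiently within their neighborhoods. Consequently, if $H$ has no universal vertex then $S_1 = \emptyset$ and $S = S_2$ is an efficient dominating set of $G$; if $H$ does have a universal vertex, taking $S_2 = \emptyset$ is never harmful.

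These structural facts yield each case. For (a) with $\gamma_{i[1,2]}(H) = 2$ we obtain $|D| = 2|S_2| \geq 2\gamma_{\mathrm{efficient}}(G)$, and the matching upper bound is realized by $D = S^{\star} \times S'$ for a minimum efficient dominating set $S^{\star}$ of $G$ and a minimum independent $[1,2]$-set $S'$ of $H$, with independence and the $[1,2]$-property verified fiber-by-fiber using Lemma \ref{Lexneighbour}. For (b), a universal vertex $u \in V(H)$ together with a minimum independent $[1,2]$-set $S$ of $G$ yields $D = S \times \{u\}$ of size $\gamma_{i[1,2]}(G)$; conversely, the projection $S$ is always an independent $[1,2]$-set of $G$ and $|D| \geq |S|$, matching the lower bound. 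For (c) we argue by contrapositive: if $H$ admits no independent $[1,2]$-set no fiber can be chosen; if $\gamma_{i[1,2]}(H) \geq 3$ the bound $|S_v| \leq 2$ is violated; if $\gamma_{i[1,2]}(H) = 2$ but $G$ has no efficient dominating set then neither the pure-$S_2$ nor the mixed strategy is available; and if $\gamma_{i[1,2]}(H) = 1$ but $G \notin \mathcal{D}_{[1,2]}^i$ then the projection $S$ cannot exist. The main obstacle is the clean bookkeeping of the exclusion between $S_1$- and $S_2$-fibers when $H$ lacks a universal vertex; once organized, the case split is routine.
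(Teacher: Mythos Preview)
Your proposal is correct and follows precisely the projection-and-fiber strategy the paper intends: the paper does not give an explicit proof of this theorem but only remarks that it ``can be proved as Theorem~\ref{lexthem2},'' and your argument is exactly the adaptation of that method to the independent setting (project $D$ to $S\subseteq V(G)$, analyze the fibers $S_v\subseteq V(H)$, and split cases according to $|S_v|\in\{1,2\}$). Your observation that independence of $S$ forces each $S_v$ to already be an independent $[1,2]$-set of $H$, and that $|S_v|=1$ forces a universal vertex in $H$ while $|S_v|=2$ forces efficient behaviour at $v$, is the correct refinement of Lemmas~\ref{Lex1}--\ref{Lex2a} for this case; the paper's deferred proof would proceed the same way.
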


\begin{theorem} \label{lexthemindk}
	Let $G$ and $H$ be two graphs.
 \begin{enumerate}
  \item [a)]
   If $G$  has  an efficient dominating set and $H$  has a independent $[1,k]$-set of size of at most $k$, then $\gamma_{i[1,k]}(G \circ H)= \gamma_{\text{efficient}}(G) \times\gamma_{i[1,k]}(H) $;
  \item [b)]
  If $G$  has an independent $[1,k]$-set and $H$  has an independent $[1,k]$-set of size at most $\lfloor k/2 \rfloor$, then $\gamma_{i[1,k]}(G \circ H)= \gamma_{i[1,k]}(G) \times \gamma_{i[1,k]}(H) $;
  \item [c)]
  Otherwise $G \circ H$, has not independent $[1,k]$-sets.
\end{enumerate}
\end{theorem}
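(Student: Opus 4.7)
The plan is to generalize the proof of Theorem~\ref{lexthemind} (the $[1,2]$ case) to arbitrary $k \geq 2$, keeping the same two-step template: establish sufficiency in (a) and (b) by exhibiting an explicit independent $[1,k]$-set of the product, and derive necessity (case (c)) by showing that the structural constraints placed on any hypothetical independent $[1,k]$-set $D$ of $G \circ H$ force one of (a) or (b) to hold. Throughout I will use Lemma~\ref{Lexneighbour}, which lets me freely move between a pair $(v,u),(v',u')$ and $(v,u),(v',u)$ whenever $\{v,v'\}\in E(G)$, together with the analogue of Theorem~\ref{Lexk}(b) for independent sets.

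For sufficiency of (a), let $S_G$ be an efficient dominating set of $G$ and $S_H$ an independent $[1,k]$-set of $H$ with $|S_H|\leq k$, and set $D=S_G\times S_H$. Two distinct points $(v,u),(v',u')\in D$ cannot be adjacent in $G\circ H$: if $v\neq v'$ then $v$ and $v'$ lie at distance at least three in $G$ because $S_G$ is efficient, and if $v=v'$ then $u,u'$ are non-adjacent in $H$ by independence of $S_H$. For $(v,u)\notin D$ with $v\in S_G$ (so $u\notin S_H$), no vertex of a different $G$-layer $G^{u'}$ contributes to $N_{G\circ H}((v,u))\cap D$, since $v$ has no $S_G$-neighbour; thus the count equals $|N_H(u)\cap S_H|\in[1,|S_H|]\subseteq[1,k]$. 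For $v\notin S_G$, efficient domination yields a unique $v'\in N_G(v)\cap S_G$, so $N_{G\circ H}((v,u))\cap D=\{v'\}\times S_H$ with count $|S_H|\in[1,k]$. The minimum-size construction corresponds to taking $\gamma_{\text{efficient}}(G)$ and $\gamma_{i[1,k]}(H)$, matching the stated value. For (b), the construction is again $D=S_G\times S_H$ where now $S_G$ is an independent $[1,k]$-set and $|S_H|\leq\lfloor k/2\rfloor$; the count computation for $v\notin S_G$ gives $|N_G(v)\cap S_G|\cdot|S_H|$, and the hypothesis $|S_H|\leq\lfloor k/2\rfloor$ must be paired with the constraint that $|N_G(v)\cap S_G|\leq k/|S_H|$ wherever this product threatens to exceed $k$.

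For necessity (c), I argue by contraposition: suppose $D$ is an independent $[1,k]$-set of $G\circ H$, and set $S'=\{v\in V(G):(v,u)\in D\text{ for some }u\}$ and $S_v=\{u:(v,u)\in D\}$ for each $v\in S'$. Independence of $D$ forces each $S_v$ to be independent in $H$, and by Lemma~\ref{Lexneighbour} combined with the analogue of Theorem~\ref{Lexk}, the sets $S_v$ can be taken of uniform cardinality across $S'$. If some $v\in S'$ has $N_G(v)\cap S'=\emptyset$, then every vertex of $H^v\setminus D$ must be dominated from within $H^v$, so $S_v$ is an independent $[1,k]$-set of $H$, and a repetition of the argument of Lemma~\ref{Lex2a} adapted to the independent setting forces $S'$ to be an efficient dominating set of $G$, landing in case (a). Otherwise $S'$ is itself an independent $[1,k]$-set of $G$, and the constraint $|N_{G\circ H}((v,u))\cap D|\leq k$ imposed on vertices $v\notin S'$ forces $|S_v|\leq \lfloor k/2\rfloor$, yielding case (b).

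The main obstacle I anticipate is the arithmetic in case (b), both for sufficiency and for the necessity dichotomy: the bound $|N_G(v)\cap S'|\cdot|S_v|\leq k$ must be reconciled with the stated hypothesis $|S_H|\leq\lfloor k/2\rfloor$ and the fact that an independent $[1,k]$-set allows up to $k$ neighbours. I expect the clean way out is to observe that whenever $|S_H|=\lfloor k/2\rfloor$, the independent $[1,k]$-set of $G$ must in fact be a $[1,2]$-set, while smaller $|S_H|$ allows proportionally more neighbours; this refinement mirrors the $\lfloor k/2\rfloor$ threshold that appeared in Theorem~\ref{Lexk}(c3) and keeps the theorem's conclusion sharp.
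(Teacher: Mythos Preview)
The paper gives no proof of this theorem; it simply asserts before Theorems~\ref{lexthemind} and~\ref{lexthemindk} that both ``can be proved as Theorem~\ref{lexthem2}''. Your plan---generalize the $[1,2]$ argument, using Lemma~\ref{Lexneighbour} and the independent-set analogues of Lemmas~\ref{Lex1}--\ref{Lex2a}---is exactly the route the paper intends, and your sufficiency argument for (a) and your necessity dichotomy for (c) are sound and match the structure of the earlier proofs.

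The obstacle you flag in case~(b) is genuine, and your proposed ``way out'' does not close it. For sufficiency you are \emph{given} an independent $[1,k]$-set $S_G$ of $G$ and an independent $[1,k]$-set $S_H$ of $H$ with $|S_H|\le\lfloor k/2\rfloor$, and you must produce an independent $[1,k]$-set of $G\circ H$ of size $\gamma_{i[1,k]}(G)\cdot\gamma_{i[1,k]}(H)$. For the product $D=S_G\times S_H$ and $v\notin S_G$ you compute $|N_{G\circ H}((v,u))\cap D|=|N_G(v)\cap S_G|\cdot|S_H|$, and nothing in the hypotheses prevents this from being as large as $k\cdot\lfloor k/2\rfloor$, which exceeds $k$ once $k\ge 4$. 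Your remark that ``the independent $[1,k]$-set of $G$ must in fact be a $[1,2]$-set'' is a constraint you can extract in the \emph{necessity} direction (from an existing $D$ one deduces $|N_G(v)\cap S'|\cdot|S_{v'}|\le k$), but it is not available as a hypothesis in the sufficiency direction, so it cannot rescue the construction. The lower bound $\gamma_{i[1,k]}(G\circ H)\ge\gamma_{i[1,k]}(G)\cdot\gamma_{i[1,k]}(H)$ does follow from your decomposition (each $S_v$ is an independent $[1,k]$-set of $H$, so $|D|=\sum_v|S_v|\ge|S'|\cdot\gamma_{i[1,k]}(H)$), but the matching upper bound is not established by the stated hypotheses alone. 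This appears to be an imprecision in the theorem statement rather than a defect in your method: a correct version of~(b) would need to couple the size of $S_H$ with a bound on the spanning numbers in $S_G$ (for instance, requiring $S_G$ to be an independent $[1,\lfloor k/|S_H|\rfloor]$-set), in the spirit of the trade-off recorded in Theorem~\ref{Lexk}(c).
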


For nontrivial path and cycles, by results for lexicographic products of graph, the following results can be obtained.
 \begin{corollary}\label{a}
 Let $P_n$ and $P_m$ be two  nontrivial paths. Then,
\begin{equation*}
\gamma_{[1, 2]}(P_n \circ P_m)  =\left\{
\begin{array}{ll}
\lceil \frac{n}{3}\rceil\; & \text{for } m=2,3;\\
2\lceil \frac{n}{4}\rceil\; & \text{for } m>3.\\
\end{array} \right.
\end{equation*}
$$\gamma_{t[1, 2]}(P_n \circ P_m)  =2\left\lceil \frac{n}{4}\right\rceil.$$

\begin{equation*}
\gamma_{i[1, 2]}(P_n \circ P_m)  =\left\{
\begin{array}{ll}
\lceil \frac{n}{3}\rceil\; & \text{for } m=2,3;\\
2\lceil \frac{n}{3}\rceil\; & \text{for } m=4,5,6;\\
\text{not exit} \; & \text{for } m>6.\\
\end{array} \right.
\end{equation*}
\end{corollary}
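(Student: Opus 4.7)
My plan is to specialize the three lexicographic-product theorems proved in this paper---Theorem \ref{lexthem2} for $\gamma_{[1,2]}$, Theorem \ref{lexthemt2} for $\gamma_{t[1,2]}$, and Theorem \ref{lexthemind} for $\gamma_{i[1,2]}$---to $G=P_n$, $H=P_m$, and then to plug in the path data from Lemma \ref{pathlem}. Since $P_m$ has no isolated vertex for $m\geq 2$, only ``Case 2'' of the first two theorems is ever relevant; and Lemma \ref{pathlem} yields $\gamma_{[1,2]}(P_m)=\gamma_{i[1,2]}(P_m)=\lceil m/3\rceil$, so $\gamma_{[1,2]}(P_m)=1$ iff $m\in\{2,3\}$, equals $2$ iff $m\in\{4,5,6\}$, and is at least $3$ for $m\geq 7$.

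For $\gamma_{[1,2]}(P_n\circ P_m)$: when $m\in\{2,3\}$ subcase (a) of Theorem \ref{lexthem2}, Case 2 applies, and the canonical $\gamma$-set $\{v_2,v_5,v_8,\ldots\}$ of $P_n$ is independent (hence $1$-dependent) and $[1,2]$, delivering the answer $\lceil n/3\rceil$, which is easily seen to be smaller than the value from subcase (b). When $m\geq 4$ only subcases (b) and, for $m\in\{4,5,6\}$, (c) are available, and the answer collapses to the structural identity
\[
\mu(P_n)\;:=\;\min\{\,|S|:\,S\text{ is a }1\text{-dependent total }[1,2]\text{-set of }P_n\,\}\;=\;2\lceil n/4\rceil.
\]
Subcase (c) cannot beat (b): if $p$ and $q$ denote, respectively, the number of adjacent pairs and of isolated-in-$S$ vertices of an $S\in\mathcal{SD}^1_{[1,2]}(P_n)$, then $|S|+\alpha=2(p+q)$, and an elementary counting argument (pairs cover at most four consecutive vertices with at most unit overlap between adjacent pairs, while singletons monopolize three non-overlapping vertices each, by the distance-$\geq 3$ clause in the definition of $\mathcal{SD}^1_{[1,2]}$) shows $p+q\geq \lceil n/4\rceil$, so $|S|+\alpha\geq 2\lceil n/4\rceil$.

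The identity $\mu(P_n)=2\lceil n/4\rceil$ is the core technical point. The upper bound is the construction $\{v_2,v_3,v_6,v_7,v_{10},v_{11},\ldots\}$, together with a brief tail analysis on $n\bmod 4$ (shifting the last pair toward the right endpoint when $n$ is not a multiple of $4$). The matching lower bound uses the observation that in any $1$-dependent total $[1,2]$-set $S$ of $P_n$ each vertex of $S$ has \emph{exactly} one neighbor in $S$---totality forces $\geq 1$, $1$-dependence forces $\leq 1$---so $S$ decomposes into disjoint edges $\{v_i,v_{i+1}\}$, each of which covers at most the four consecutive vertices $\{v_{i-1},v_i,v_{i+1},v_{i+2}\}$; tiling all $n$ path vertices thus requires at least $\lceil n/4\rceil$ such pairs. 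Once $\mu(P_n)$ is in hand, Theorem \ref{lexthemt2}, Case 2(a) immediately yields $\gamma_{t[1,2]}(P_n\circ P_m)=2\lceil n/4\rceil$ for every $m\geq 2$, and its subcase (b) (relevant only for $m\in\{4,5,6\}$) does not improve this bound by the same counting.

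For $\gamma_{i[1,2]}(P_n\circ P_m)$ we apply Theorem \ref{lexthemind}. A direct construction shows that $P_n$ has an efficient dominating set of size $\lceil n/3\rceil$ for every $n\geq 2$ (tile by an initial $\{v_1,v_2\}$- or $\{v_1,v_2,v_3\}$-block, continue with interior $3$-blocks, and close with a $\{v_{n-1},v_n\}$-block when the residue demands it), so subcase (a) applies precisely when $\gamma_{i[1,2]}(P_m)\leq 2$, i.e., when $m\leq 6$, producing $\lceil n/3\rceil\cdot\gamma_{i[1,2]}(P_m)$; this gives $\lceil n/3\rceil$ for $m\in\{2,3\}$ and $2\lceil n/3\rceil$ for $m\in\{4,5,6\}$. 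For $m\geq 7$ neither subcase (a) nor (b) of Theorem \ref{lexthemind} applies, so $P_n\circ P_m$ admits no independent $[1,2]$-set. The principal obstacle throughout is the identity $\mu(P_n)=2\lceil n/4\rceil$; once that is in place, Corollary \ref{a} is just bookkeeping of which subcase of each general theorem is active for which values of $m$.
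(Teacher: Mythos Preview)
Your proposal is correct and is exactly the approach the paper intends: the paper offers no detailed proof at all, merely stating that the corollary follows ``by results for lexicographic products of graphs,'' i.e., by specializing Theorems~\ref{lexthem2}, \ref{lexthemt2}, and \ref{lexthemind} to $G=P_n$, $H=P_m$ together with Lemma~\ref{pathlem}. You in fact supply the substantive computation the paper leaves implicit, namely the identity $\mu(P_n)=2\lceil n/4\rceil$ for the minimum $1$-dependent total $[1,2]$-set of $P_n$ and the verification that the $\mathcal{SD}^1_{[1,2]}$-subcases never undercut it.
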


\begin{corollary}\label{b}
 Let $C_n$ and $C_m$ be two arbitrary nontrivial path, then
\begin{equation*}
\gamma_{[1, 2]}(C_n \circ C_m)  =\left\{
\begin{array}{ll}
\lceil \frac{n}{3}\rceil\; & \text{for } m=2,3;\\
5m& \text{for } n=5;\\
2\lceil \frac{n}{4}\rceil\; & \text{for } m>3 \text{ and } n\neq5 .\\
\end{array} \right.
\end{equation*}

\begin{equation*}
\gamma_{t[1, 2]}(C_n \circ C_m)  =\left\{
\begin{array}{ll}
\text{not exit} \; & \text{for } n=5;\\
2\lceil \frac{n}{4}\rceil\; & \text{for } n\neq 5.\\
\end{array} \right.
\end{equation*}

\begin{equation*}
\gamma_{i[1, 2]}(C_n \circ C_m)  =\left\{
\begin{array}{ll}
\lceil \frac{n}{3}\rceil\; & \text{for } m=2,3;\\
2\lceil \frac{n}{3}\rceil\; & \text{for } m=4,5,6 \text{ and } n\equiv 0\;\mod 3;\\
\text{not exit} \; & \text{otherwise } .\\
\end{array} \right.
\end{equation*}
\end{corollary}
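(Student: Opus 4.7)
The plan is to specialise Theorems \ref{lexthem2}, \ref{lexthemt2}, and \ref{lexthemind} to $G = C_n$, $H = C_m$, and then read off the claimed values using Lemma \ref{pathlem}. Since $C_m$ has no isolated vertex, only Case 2 of each theorem is active. The case analysis will be driven by two parallel dichotomies: on the $H$-side, by which of $\gamma_{[1,2]}(C_m)$ and $\gamma_{i[1,2]}(C_m)$ equals $1$, equals $2$, or exceeds $2$, corresponding by Lemma \ref{pathlem} to $m \in \{2,3\}$, $m \in \{4,5,6\}$, and $m \geq 7$ respectively; on the $G$-side, by which of the witness sets (a $1$-dependent $[1,2]$-set, a $1$-dependent total $[1,2]$-set, an element of $\mathcal{SD}^1_{[1,2]}(C_n)$, or an efficient dominating set) the cycle $C_n$ actually admits.

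For $\gamma_{[1,2]}(C_n \circ C_m)$, when $m \in \{2,3\}$ the minimum $1$-dependent $[1,2]$-set of $C_n$ is a minimum independent dominating set, of size $\lceil n/3 \rceil$ by Lemma \ref{pathlem}, which triggers Case 2(a) of Theorem \ref{lexthem2}. When $m > 3$ and $n \neq 5$, I would exhibit an explicit $1$-dependent total $[1,2]$-set of $C_n$ of size $2\lceil n/4 \rceil$ by selecting two consecutive vertices out of every block of four around the cycle, adjusting the final short block when $4 \nmid n$, and verify the $1$-dependence and the $[1,2]$ neighbour count by direct inspection; Case 2(b) then yields the claimed value. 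When $n = 5$ and $m > 3$, a short exhaustive listing of the subsets of $V(C_5)$ shows that every $1$-dependent $[1,2]$-set of $C_5$ contains an isolated vertex at distance $2$ from another selected vertex, so $\mathcal{SD}^1_{[1,2]}(C_5) = \emptyset$, and no $1$-dependent total $[1,2]$-set exists; Case 2(d) then fires and gives $5m = |V(C_5)| \times |V(C_m)|$.

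The total and independent formulas follow from the same structural information. Reading the above through Theorem \ref{lexthemt2}, the constructed $1$-dependent total $[1,2]$-set of $C_n$ realises Case 2(a) and gives $\gamma_{t[1,2]}(C_n \circ C_m) = 2\lceil n/4 \rceil$ for $n \neq 5$; for $n = 5$ both Cases 2(a) and 2(b) fail, leaving Case 2(c) and the nonexistence assertion $C_5 \circ C_m \notin \mathcal{D}^t_{[1,2]}$. For the independent case I would observe that $C_n$ admits an efficient dominating set exactly when $3 \mid n$, with $\gamma_{\text{efficient}}(C_n) = n/3 = \lceil n/3 \rceil$; combined with $\gamma_{i[1,2]}(C_m) = \lceil m/3 \rceil$, Theorem \ref{lexthemind} returns $\lceil n/3 \rceil$ for $m \in \{2,3\}$ via condition (b), $2\lceil n/3 \rceil$ for $m \in \{4,5,6\}$ together with $3 \mid n$ via condition (a), and nonexistence in the remaining cases.

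I expect the main obstacle to be the simultaneous nonexistence verification at $n = 5$: one must check, across all subsets of $V(C_5)$, that no $1$-dependent $[1,2]$-set lies in $\mathcal{SD}^1_{[1,2]}(C_5)$ and that no $1$-dependent total $[1,2]$-set exists, so that the ``otherwise'' clauses of Theorems \ref{lexthem2} and \ref{lexthemt2} fire correctly and produce, respectively, the trivial value $5m$ and the nonexistence of total $[1,2]$-sets. A secondary technical issue is cleanly exhibiting the $1$-dependent total $[1,2]$-set constructions in $C_n$ across the four residues $n \bmod 4$ (with $n \geq 6$) together with a matching lower bound confirming that $2\lceil n/4 \rceil$ is optimal.
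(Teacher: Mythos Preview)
Your proposal is correct and follows essentially the same approach as the paper: the paper states Corollary~\ref{b} without a separate proof, simply as a direct consequence of specialising Theorems~\ref{lexthem2}, \ref{lexthemt2}, and \ref{lexthemind} (together with Lemma~\ref{pathlem}) to $G=C_n$ and $H=C_m$. Your outline fills in precisely those details---identifying which clause of each theorem is triggered by the values of $\gamma_{[1,2]}(C_m)$, $\gamma_{i[1,2]}(C_m)$, and the structural witnesses in $C_n$---and the exhaustive check that $C_5$ admits neither a $1$-dependent total $[1,2]$-set nor a member of $\mathcal{SD}^1_{[1,2]}(C_5)$ is exactly the missing verification needed for the $n=5$ clauses.
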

The values obtained in Corollary \ref{a} are exactly the same as  values for $P_n \circ C_m$ and the ones in Corollary  \ref{b} are exactly as the  values for $C_n \circ P_m$.

\section{Complexity} \label{complextysec}

In this section, we will show that the decision problem for total $[1,2]$-set is $NP$-complete. We will do this  by reduction the $NP$-complete problem, Exact $3$-Cover, to Total $[1, 2]$-Set.\\

\textbf{Exact $3$-cover problem:}
\\
Input of this problem is a finite set $X=\{ x_1, x_2, .... , x_{3q} \}$ with $|X| = 3q$ and a collection $C$ of 3-element subsets of $X$ such as  $C_i =\{x_{i_1},x_{i_2},x_{i_3}\}$. our goal is to understand is there a $C'\subseteq C$ such that every element of $X$ appears in exactly one element of $C'$?\\

\textbf{Total $[1, 2]$-set problem:}
\\
Input of this problem is a graph $G = (V, E)$ and a positive integer $k \leq |V|$.
We want to investigate is there any total $[1,2]$-set of cardinality at most $k$ for $G$.

\begin{theorem}\label{Complexity2}
	Total [1, 2]-SET is $NP$-complete for bipartite graphs.
\end{theorem}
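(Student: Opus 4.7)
The plan is to prove NP-completeness in two parts: membership in NP and NP-hardness via a reduction from Exact 3-Cover (X3C). Membership in NP is immediate: given a candidate $D\subseteq V(G)$ with $|D|\le k$, we can verify in polynomial time that $1\le |N(v)\cap D|\le 2$ for every $v\in V(G)$.

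For NP-hardness, given an X3C instance $(X,C)$ with $X=\{x_1,\ldots,x_{3q}\}$ and $C=\{C_1,\ldots,C_m\}$, I plan to build a bipartite graph $G$ whose vertex set combines element-vertices $x_i$ on one side of the bipartition, set-vertices $C_j$ on the other, and short pendant-path gadgets attached to each $x_i$ (and, if needed, to each $C_j$); the edges encode the containment $x_i\in C_j$ together with the gadget connections. Bipartiteness extends naturally by 2-coloring each gadget-vertex according to its distance from the original vertex it is attached to.

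The core idea is that attaching a pendant path $x_i - a_i - b_i$ (with $b_i$ a leaf) forces $a_i\in D$ for every total $[1,2]$-set, since $b_i$'s unique neighbor is $a_i$. This contributes a guaranteed $+1$ to $|N(x_i)\cap D|$, so the $[1,2]$-upper bound restricts the number of $C_j\in D$ containing $x_i$ to at most one, while the lower bound on $|N(x_i)\cap D|$ is already satisfied by $a_i$. Carefully calibrated gadgets on the $C_j$-side, together with cost accounting over the $x_i\in D$ versus $b_i\in D$ alternatives for satisfying $a_i$'s total-domination requirement, are designed to force each $x_i$ actually to be covered by exactly one selected $C_j$. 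Taking $k$ to be the total unavoidable gadget cost plus $q$, I will then verify the two directions: given an exact 3-cover $C'\subseteq C$ of size $q$, the set $D$ consisting of all gadget-forced vertices together with $C'$ is a total $[1,2]$-set of size $k$; conversely, any total $[1,2]$-set $D$ with $|D|\le k$ must, via the baseline-plus-$[1,2]$-bound analysis on each $x_i$, yield $C':=D\cap\{C_1,\ldots,C_m\}$ covering every element of $X$ exactly once, i.e.\ an exact 3-cover.

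The main obstacle will be the gadget design. The $[1,2]$-constraint by itself only distinguishes ``at least one'' from ``zero'' and ``at most two'' from ``more'', but not ``exactly one'' from ``one or two''; the gadget must shift the baseline count at each $x_i$ so that the effective admissible extra count of $C_j$-dominators collapses to $\{0,1\}$, and simultaneously the total-domination requirement on the $C_j$-side must force every element actually to be covered. Balancing these competing constraints so that the threshold $k$ is met precisely when the selection is an exact cover, while keeping the construction bipartite and of polynomial size, is the delicate combinatorial part; once this is in place, verifying bipartiteness of $G$ and the polynomial running time of the reduction is routine.
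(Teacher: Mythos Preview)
Your high-level plan---reduce from Exact $3$-Cover and encode the instance in a bipartite incidence-type graph with local gadgets---matches the paper's strategy. However, the proposal stops precisely at the point that carries the argument: you explicitly defer ``the gadget design'' as ``the main obstacle'' and never give a construction. That is a genuine gap, not a routine detail.

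Concretely, your pendant $x_i\!-\!a_i\!-\!b_i$ does force $a_i\in D$ and hence caps the number of $C_j\in D$ containing $x_i$ at most one. But it does \emph{not} force at least one: $x_i$ is already totally dominated by $a_i$, and the requirement that $a_i$ have a neighbour in $D$ is met equally well by $b_i\in D$ or by $x_i\in D$, at the same cost. Your ``cost accounting over the $x_i\in D$ versus $b_i\in D$ alternatives'' therefore cannot, by itself, break this symmetry, and you give no $C_j$-side gadget that would. Without that, the backward direction (a size-$k$ total $[1,2]$-set yields an exact cover) does not go through.

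The paper resolves this by placing the gadgets on the \emph{set} side rather than the element side: to each $C_j$ it attaches a $4$-cycle through a hub $u_j$, and three intermediate vertices $v_{j,1},v_{j,2},v_{j,3}$ joined to $u_j$ and to the three elements of $C_j$; the $x_i$'s are left as leaves whose only neighbours are the appropriate $v_{j,\ell}$. Now coverage is forced for free---each $x_i$ must be dominated by some $v_{j,\ell}$---and the budget $k=2t+q$ (with $2t$ unavoidably consumed by the $t$ four-cycles) leaves exactly $q$ vertices to dominate $3q$ elements, three at a time, so the selection must be an exact cover. If you want to salvage your element-side approach you will need an additional mechanism that makes ``zero $C_j$'s'' strictly costlier than ``one $C_j$''; as written, the proposal does not supply one.
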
	
\begin{proof}
	Let $D\subseteq V$ is given, we verify $D$ is a total $[1,2]$-set. For any vertex $v\in D$, we check neighborhood of  each vertex and compute span number of any vertex $v \in V$. If there is a vertex $v$ with span number more than 2, this set isn't a   total $[1,2]$-set for $G$.
	It is obvious this algorithm is done in polynomial time and  total $[1,2]$-set is a $NP$ problem.
	Now for a set $X$, and a collection $C$  of 3-element subsets of $X$, we build a graph and transform EXACT 3-COVER  into a total $[1,2]$-set problem.
	\\	
	Let $X = \{x_{1}, x_{2}, . . . , x_{3q}\}$ and $C = \{C_{1}, C_{2}, . . . , C_{t}\}$.
	For each $C_{i}\in C$, we build a cycle $C_{4}$ with a vertex $u_{i}$. we add new vertices  $\{v_{1_1}, v_{1_2}, v_{1_3}, v_{2_1},v_{2_2},v_{2_3}, ..., v_{t_1},v_{t_2},v_{t_3}\}$ and connect all vertices $v_{i1},v_{i2},v_{i3}$ to $u_i$. Then add some other vertices $\{x_{1}, x_{2}, . . . , x_{3q}\}$ and edges $x_{i}v_{j_1}$, $x_{i}v_{j_2}$ and $x_{i}v_{j_3}$,  if $x_{i}\in C_{j}$. $G$ is a bipartite graph.
	\\
	Let $k = 2t + q$.
	Suppose that $C'$ is a solution for set $X$ and collection  $C$ of EXACT $3$-COVER.
	We build a set $D$ of vertices of $G$ contain every $u_{i}$, $1 \leq i \leq t$,
	and another vertex of $C_{4}$ adjacent to $u_i$ and one of the $v_{j_1}$, $v_{j_2}$ or $v_{j_3}$ for each $C_{j} \in C'$.
	If $C'$ exists, then   it's cardinality is precisely q, and so $|D| = 2t + q = k$.
	We can check easily  that $D$ is a [1, 2]-total set of $G$.
	\\	
	Conversely, suppose that $G$ has a total $[1, 2]$-set $D$ with $|D| \leq 2t + q = k$. Then $D$ must contain two vertices of every $C_{4}$, in the best case we select $u_{i}$ and one of the vertices  in that adjacency in $C_{4}$. We select $2t$ vertices that dominate all vertices of cycles and all vertices of form $v_{i_1}$, $v_{i_2}$ or $v_{i_3}$ for $1 \leq i \leq t$. Since each $v_{i_j}$ dominates only three vertices of $\{x_{1}, x_{2}, . . . ,x_{3q}\}$
	We have to select exactly $q$ vertices of them, i.e. we select $q$ 3-element subsets of form $\{v_{i_1}, v_{i_2}, v_{i_3} \}$ and one element of each of them.
	Each of this $v_{i_j}$ correspond to a $C_i$ and  union of them is a exact cover for $C$.
\end{proof}

\begin{example}
	Let $C=\{C_1,C_2,C_3,C_4\}=\{\{x_1,x_2,x_4\},\{x_3,x_5,x_7\},\{x_4,x_5,x_6,x_7\},\{x_6,x_8,x_9\}\}$, corresponding graph was shown in Figure \ref{fig:npgraph2}.
	\begin{figure}[h!]
		\centering
		\includegraphics[width=0.7\textwidth]{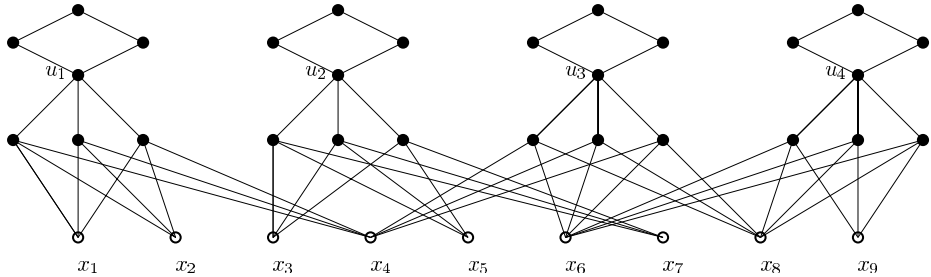}
		\caption{$NP$-completeness for bipartite graph}
		\label{fig:npgraph2}
	\end{figure}
\end{example}
 \section*{Acknowledgment}
        The authors are grateful to  Dr. A.~Shakiba and A. K. Goharshady for their constructive comments and suggestions on improving  our paper.

%%\bibliographystyle{spbasic}
%%\bibliographystyle{alpha}
%%\bibliographystyle{plain}
%%%\bibliographystyle{ieeetr}
%%\bibliographystyle{unsrt}
%%\bibliographystyle{acm}
%%%\bibliography{mybib}
%%%\bibliographystyle{plain}

\end{document}